\title{Balanced Allocation on Hypergraphs\thanks{The first author is supported by the Australian Research Council
		Discovery Project DP190100977. The second and third authors are supported by the Australian Research Council Discovery Project DP170102794. 
A preliminary version of this work appeared in the proceedings of APPROX/RANDOM2020.}}
\author{
	Catherine Greenhill\thanks{UNSW Sydney, Australia, \texttt{c.greenhill@unsw.edu.au}} \and Bernard Mans\thanks{Macquarie University, Sydney, Australia, \texttt{bernard.mans@mq.edu.au}} \and Ali Pourmiri\thanks{UNSW, Sydney, Australia, \texttt{a.pourmiri@unsw.edu.au}}}
\theoremstyle{plain}
\newtheorem{theorem}{Theorem}
\newtheorem{lemma}[theorem]{Lemma}
\newtheorem{proposition}[theorem]{Proposition}
\theoremstyle{definition}
\newtheorem{definition}[theorem]{Definition}
\theoremstyle{remark}
\newtheorem{remark}[theorem]{Remark}
\renewcommand{\leq}{\leqslant}
\renewcommand{\geq}{\geqslant}
\renewcommand{\le}{\leqslant}
\renewcommand{\ge}{\geqslant}
\newcommand{\Oh}{\mathcal{O}}
\newcommand{\E}{\mathsf{E}}
\newcommand{\e}{\mathrm{e}}
\newcommand{\I}{\mathbb{I}} %\newcommand{\I}{\mathsf{I}}
\newcommand{\C}{\mathcal{C}}
\newcommand{\A}{\mathcal{A}}
\newcommand{\B}{\mathcal{B}}
\newcommand{\Ec}{\mathcal{E}}
\newcommand{\F}{\mathcal{F}}
\renewcommand{\H}{\mathcal{H}}
\newcommand{\emp}{\textsc{{Empty}}}
\newcommand{\col}{\text{col}}
\renewcommand{\Pr}[1]{\ensuremath{\operatorname{\mathbf{Pr}}\left[#1\right]}}
\newcommand{\Ex}[1]{\ensuremath{\operatorname{\mathbf{E}}\left[#1\right]}}
\newcommand{\Expi}[2]{\ensuremath{\operatorname{\mathbf{E}}_{#2}\left[#1\right]}}
\newcommand{\Var}[1]{\ensuremath{\operatorname{\mathbf{Var}}\left[#1\right]}}
\newcommand{\viz}[1]{\ensuremath{\operatorname{\mathtt{vis}}(#1)}}
\newcommand{\unif}{s}  % uniformity = size of an edge. Used to be $\unif$.
\begin{document}
	
	\maketitle
	
	\begin{abstract}
		  We consider a variation of balls-into-bins which randomly allocates $m$ balls into $n$ bins. Following  Godfrey's model (SODA, 2008), 
		  we assume that  each ball $t$, $1\le t\le m$, comes with a hypergraph $\mathcal{H}^{(t)}=\{B_1,B_2,\ldots,B_{s_t}\}$, and each edge  $B\in\mathcal{H}^{(t)}$ contains at least a logarithmic number of bins.  
		   Given  $d\ge 2$, our $d$-choice algorithm  chooses an edge $B\in \mathcal{H}^{(t)}$, uniformly at random, and  then chooses a set  $D$ of $d$ random bins from the selected edge $B$. The ball is allocated to a least-loaded bin from $D$, with ties are broken randomly.
		   We prove that if the hypergraphs $\mathcal{H}^{(1)},\ldots, \mathcal{H}^{(m)}$ satisfy a \emph{balancedness} condition and have low \emph{pair visibility}, then  after allocating $m=\Theta(n)$ balls, the maximum number of balls at any bin, called the \emph{maximum load}, is at most $\log_d\log n+\Oh(1)$, with high probability.   
		     The balancedness condition enforces  that bins appear almost uniformly within the hyperedges of $\H^{(t)}$, $1\le t\le m$, while the pair visibility condition measures how frequently  a pair of bins is chosen during the allocation of balls.
		   Moreover, we establish a lower bound for the maximum load attained by the balanced allocation for a sequence of  hypergraphs in terms of pair visibility, showing the relevance  of the visibility parameter to the maximum load.
		  
		     In Godfrey's model,  each ball is forced to probe all bins in a randomly selected hyperedge and the ball is then allocated in a least-loaded bin. 
		 Godfrey showed that if each $\H^{(t)}$, $1\le t\le m$, is balanced and $m=\Oh(n)$, then the  maximum load is at most one, with high probability.  
		     However, we apply the power of $d$ choices paradigm, and only query the load information of $d$ random bins per ball, while achieving very slow growth in the maximum load.

		In summary, our contribution is twofold: (1) We introduce the notion of pair visibility and  study popular $d$-choice allocation algorithms with related choices including graphs and hypergraphs models. (2) In order to show the result, we generalize the witness tree technique for a dynamically changing structure, which is of independent interest.

	\end{abstract}
	
	%\newpage % CSG: Front page isn't counted within 12 page limit, so make it separate

\newpage
	
	\section{Introduction}
	The standard balls-into-bins model is a process that randomly allocates $m$ sequential balls into $n$ bins,
	where each ball chooses a set $D$ of $d$ bins, independently and uniformly at random, 
	then the ball is allocated to
	a least-loaded bin from $D$ (with ties broken randomly). 
	It is well known that when $m=n$ and $d=1$, 
	at the end of process the maximum number of balls at any bin, the \emph {maximum load}, is $(1+o(1))\frac{\log n}{\log\log n}$, with high probability. Surprisingly, Azar et al.~\cite{ABKU99} showed that for this $d$-choice process with $d\ge 2$, provided ties are broken randomly, the maximum load exponentially decreases to ${\log_d\log n}+\Oh(1)$. 
	This phenomenon is known as the \emph{power of $d$ choices}.
	The multiple-choice paradigm has been successfully applied in a wide range of problems from
	nearby server selection, and load-balanced file placement in the distributed hash table, 
	to the performance analysis of dictionary data structures (e.g., see  \cite{Wieder17}).  
	In the classical setting, all $\binom{n}{d}$ sets of $d$ bins are available to each ball.
	However, in many realistic scenarios such as cache networks, peer-to-peer or cloud-based systems, the balls (requested files, jobs, items,..)  have to be allocated to bins (servers, processors,...) that are close to them,
	in order to minimize the access latencies. 
	For example, Berenbrink et al.~\cite{BBFMN13} applied  results from balls-into-bins theory to cloud storage environments to show that it is possible to both balance the load nearly perfectly and to keep the data close to its origin.
	On the other hand, the lack of perfect randomness stimulates the de-randomization of the $d$-choice process,
	which also requires the study of non-uniform distributions over choices 
	{(e.g.~\cite{Aamand18,CRSW13,Chen19, Dahlgaard16})}. 
	Hence in many settings, allowing all possibilities for the set $D$ of $d$ bins is costly, and may
	not be practical.  This motivates the investigation of the effect of distributions of the set $D$
	on the maximum load. Therefore,
	Byers et al.~\cite{BCM04} studied a model where $n$ bins are uniformly at random placed on a geometric space. Then each ball, in turn, picks $d$ locations in the space. Corresponding to these $d$ locations, the ball probes the load of $d$ bins that have the minimum distance from the locations. The ball then allocates itself to one of the $d$ bins with minimum load. In this scenario, the probability that a location close to a bin is chosen depends on the 
	distribution of other bins in the space and hence there is not a uniform distribution over the potential choices. Here, the authors showed the maximum load is ${\log_d\log n}+\Oh(1)$. 	

	Later on, 
	Kenthapadi and Panigrahy~\cite{KP06} proposed a graphical balanced allocation in which bins are interconnected as a $\Delta$-regular graph and each ball picks a random edge of the graph. It is then placed in one of its endpoints with a smaller load. This allocation algorithm results in a maximum load of $\log\log n+\Oh\left(\frac{\log n}{\log (\Delta/\log^4 n)}\right)+\Oh(1)$.
	Here, one may see that the possibilities for the set $D$ (the two chosen bins) is restricted to the set of $n\Delta/2$ edges of the graph. 
	In the standard balls-into-bins model with $d=2$,  the underlying graph is a complete graph (all
	$\binom{n}{2}$ edges present). Kenthapadi and Panigrahy showed that 
 that the maximum load is $\Theta(\log\log n)$ if and only if 
	$\Delta={n}^{\Omega(1/\log\log n)}$. 
	  Following the study of balls-into-bins with related choices, 
	Godfrey~\cite{God08} utilized hypergraphs to model the structure of bins. He assumed that each ball comes with a $\beta$-balanced hypergraph and selects a random edge  that contain $\Omega(\log n)$ bins. Then, the ball is allocated to a least-loaded bin contained in the edge, with ties broken randomly.
	Godfrey showed that if the number of balls is bounded by $\Oh(n/(\beta\log \beta))$, then with high probability, the maximum load is one.
	Later, Berenbrink et al. \cite{BBFN12}
	improved  Godfrey's analysis and showed that if the number of balls is at most $\Oh(n/\beta)$, then the maximum load is one, with high probability.
	 Balanced allocation on graphs and hypergraphs has been further studied in~\cite{BBFN12,BSS013,PTW14,PJ19}. In the aforementioned works, either the underlying graph is fixed during the process or, in the hypergraph setting, the number of choices per ball, denoted by $d$, satisfies $d=\Omega(\log n)$.  
	 Los, Sauerwald and Sylvester~\cite{LSS} recently provided a general
	 framework for analysing balanced allocation algorithms which covers
	 several previously-studied processes, including those of~\cite{FG,MPS}.
	
		Peres et al.~\cite{PTW14} also considered balanced allocation on graphs where the number of balls $m$ can be much larger than $n$ (i.e., $m\gg n$) and the graph is not necessarily regular and dense. They established the upper bound $\Oh(\log n/\sigma)$ for the gap between the maximum and the minimum loaded bin after allocating $m$ balls, where $\sigma$ is the edge expansion of the graph. Bogdan et al.~\cite{BSS013} studied a model where each ball picks a random vertex and performs a local search from the vertex to find a vertex with local minimum load, where it is finally placed. They showed that when the graph is a constant degree expander, the local search guarantees a maximum load of $\Theta(\log\log n)$.
		When queries are allowed, Los and Sauerwald~\cite{LS22} recently proved an $O(\log \log n)$ gap for Two-Choice on dense expander graphs.
	
	Pourmiri~\cite{PJ19} substituted this local search by non-backtracking random walks of length $\ell=o(\log n)$ to sample the choices and then the ball is allocated to a least-loaded bin. Provided the underlying graph has sufficiently large girth and $\ell$, he showed the maximum load is a constant.  In the context of hashing (e.g., \cite{Aamand18, Dahlgaard16}), the authors apply the witness graph technique to analyze the maximum load in the balls-into-bins process where the bins are picked based on  tabulation.
	Recently, Bansal and Feldheim developed a global adaptive balanced allocations on
graphs that with high probability, ensures a gap of $O(d/k) \log^4 n \log \log n $ between the load of any two bins.
	
	Balanced allocation on hypergraphs can also be seen as an adversarial model, where the set $D$ of potential choices is proposed by an adversary (environment) whose goal is to increase the maximum load. Here we want to understand the
	conditions under which the balanced allocation on  (hyper)graphs still benefits from the effect of the power of $d$ choices.

	\subsection{Our Results}\label{sub:res}
	
	We study $d$-choice balanced allocation algorithms on different  environments, namely dynamic graph and hypergraph models.
	 In order to see the effect of  power of $d$-choice paradigm, we introduce the notion of \emph{pair visibility}. 
	For a pair $\{i,j\}$ of distinct vertices, the \emph{visibility} of $\{i, j\}$, denoted by $\viz{i,j}$, 
	is the number of rounds $t\in\{1,\ldots, n\}$ such $\{i,j\}$ is contained in an  edge of the $t$-th hypergraph, $\H^{(t)}$, 
	(a  formal definition is given below).
	When ball $t$ is placed into a bin, the \emph{height} of ball $t$ is the
	number of balls that were allocated to the bin before ball $t$. 
	We say that event $\E_n$ holds \emph{with very high probability} if 
	$\Pr{\E_n}\ge 1-n^{-c}$  for some constant $c>0$.
All logarithms are natural unless otherwise specified.

	\subsubsection*{Balanced Allocation on  Hypergraphs}
	
	Write $[n]=\{1,\ldots, n\}$  to be the set of $n$ bins.
	A hypergraph $\H=([n],\Ec)$ is $\unif$-\emph{uniform} if $|H|=\unif$ for every $H\in \Ec$. 
	{For every integer $n\geq 1$, let $s=\unif(n)$ be an integer such $2\leq s\leq n$.}
	A \emph{dynamic $\unif$-uniform hypergraph}, denoted by
	{$(\H^{(1)},\H^{(2)},\ldots, \H^{(n)})$,}
	is a sequence of $s$-uniform hypergraphs $\H^{(t)}=([n], \mathcal{E}_t)$ with vertex set $[n]$.
	The edge sets $\Ec_t$ may change with $t$. A hypergraph is \emph{regular} if every
		vertex is contained in the same number of edges.
	We use term ``dynamic'' to capture potential changes in hypergraphs exposed with each ball. 
	
	In this paper, we are interested in the following properties which  hypergraphs
	may satisfy.
	We refer to these properties as the \emph{balancedness}, \emph{visibility}, and 
	\emph{size} properties. 
	{The balancedness property is adapted from~\cite{BBFN12,God08}.}  
	
	\begin{definition} \label{def:balancedness}
		\underline{\sf Balancedness:}\ Let $H_t$ denote a randomly chosen edge from $\Ec_t$. 
		{If there exists a constant $\beta \geq 1$} such that 
		$\Pr{i\in H_t}\le \beta \unif/n$
		for every $1\le t\le n$ and each bin $i\in [n]$, then the dynamic hypergraph
		$(\H^{(1)},\ldots \H^{(n)})$ is $\beta$-balanced.
		A dynamic hypergraph is balanced if it is $\beta$-balanced for some constant $\beta\geq 1$.
		\end{definition}

\noindent Note that every regular hypergraph is 1-balanced.

\begin{definition} \label{def:visibility}
		\underline{\sf Visibility:} \
		For every pair of distinct vertices $\{i, j\}\subset [n]$, the visibility of $\{i,j\}$ is
		\[
		\viz{i,j}=\left|\{t \in \{1,2,\ldots, n\}\, \mid  \{i,j\}\subset H\in \Ec_t \}\right|.
		\]
		If there exists $\varepsilon = \varepsilon(n) \in (0,1)$ such that
		{$\viz{i, j} \leq \unif n^{1-\varepsilon}$} for all pairs $\{i,j\}\subseteq [n]$
		of distinct bins then the dynamic hypergraph $(\H^{(1)},\ldots,\H^{(n)})$ is  
		$\varepsilon$-visible.
		A dynamic hypergraph satisfies the visibility property if it is
		$\varepsilon$-visible for some constant $\varepsilon\in (0,1)$. 
		\end{definition}

\noindent A lower value of $\varepsilon$ means there exists a pair of nodes that  appear more frequently within the edges of $\H^{(t)}$, $t=1,\ldots,n$. 	
	 
	 \begin{definition} \label{def:size}
		\underline{\sf Size:} \
		If $\unif = \Omega(\log n)$ and 
		there exists a positive constant $c_0\geq 1$ such that  
		{$|\Ec_t| \leq n^{c_0}$} for every $t\geq 1$, 
		then
		the dynamic hypergraph $(\H^{(1)},\ldots, \H^{(n)})$ satisfies the $c_0$-size property.
		A dynamic hypergraph satisfies the size property if it satisfies the $c_0$-size
		property for some constant $c_0\geq 1$.
		\end{definition}
	
\noindent Next we define the balanced allocation process on hypergraphs.
	
	\begin{definition}[Balanced Allocation on Hypergraphs]
		Suppose that  $(\H^{(1)},\ldots, \H^{(n)})$ is an $s$-uniform dynamic hypergraph and let $d=d(n)$ be an integer-valued function such that $2\leq d\leq s$. 
		The balanced allocation algorithm on  $(\H^{(1)},\ldots, \H^{(n)})$
		proceeds in rounds $(t=1,2,\ldots, n)$, sequentially allocating $n$ balls to $n$ bins. 
		In round $t$,  the $t$-th ball chooses an edge $H_t$ uniformly at random from $\Ec_t$, %say $H_t$. After that,  
		then it randomly chooses a set $D_t$ of $d$ bins from $H_t$ (without repetition) and allocates itself to a least-loaded  bin from $D_t$, with ties broken randomly.
	\end{definition}

\begin{theorem}\label{thm:d-choice}
Let $\varepsilon = \varepsilon(n)\in (0,1)$ satisfy $n^\varepsilon\to\infty$,
		and let $d=d(n)$ be an integer-valued function such that $2\leq d = o(\varepsilon\, \log n)$.
		Let $(\H^{(1)},\ldots, \H^{(n)})$ be a sequence of  $s$-uniform hypergraphs which satisfies
		the balancedness, $\varepsilon$-visibility and size properties.
		There exists $\Omega(n)= m\le n$ such that after 
		the balanced allocation process on $(\H^{(1)},\ldots, \H^{(n)})$ has allocated $m$ balls, 
		the maximum load is $\log_d\log n+\Oh(1/\varepsilon)$
		with very high probability.
		Moreover, for every fixed positive integer $\gamma$ with $\gamma m\leq n$, after allocating $\gamma m$ balls the maximum load is at most $\gamma(\log_d\log n+\Oh(1/\varepsilon))$, with very high probability.
	\end{theorem}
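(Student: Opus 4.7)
The plan is to adapt the classical witness tree technique of Azar--Broder--Karlin--Upfal to the dynamic hypergraph setting. Set $k^{\ast} := \log_d \log n + C/\varepsilon$ for a sufficiently large constant $C$, and aim to show that $\Pr{\text{max load after } m \text{ balls} > k^{\ast}} = \Oh(n^{-c})$ for any constant $c>0$. If a ball $b$ placed at round $t$ ends at height $\geq k^{\ast}$, every bin in $b$'s choice set $D_t$ must have had load $\geq k^{\ast}$ at time $t$, hence each such bin contains an earlier ball at height $\geq k^{\ast}-1$. Iterating this observation through $K = \Theta(\log_d \log n)$ levels yields a labelled $d$-ary witness tree whose nodes are triples (ball, round, bin), in which each parent's choice set contains the bins of its $d$ children. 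The recursion is truncated at a constant base height, whose occurrence at any specific bin is controlled by a first-moment argument using the balancedness property.

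Given a fixed tree $T$, I would next bound the probability that the process realises $T$. Since the hypergraph sequence is deterministic and the choices $(H_t, D_t)$ are jointly independent across $t$, for a single internal node at round $t$ with required child bins $\{j_1,\dots,j_d\}$ one has
\[
\Pr{D_t = \{j_1,\dots,j_d\}} \;=\; \frac{1}{\binom{s}{d}} \cdot \frac{|\{H \in \Ec_t : \{j_1,\dots,j_d\}\subseteq H\}|}{|\Ec_t|}.
\]
Multiplying across the $N = \Oh(\log n)$ internal nodes and using $\Pr{j \in H_t} \leq \beta s/n$ at each factor yields a per-tree probability on the order of $(d/n)^{dN}$, modulo correction terms absorbed by choosing $C$ large enough.

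The heart of the argument is the enumeration of witness trees. A naive count over (round, edge, bin) allows roughly $n \cdot n^{c_0} \cdot s$ choices per node, far too many. The saving comes from the visibility property: for every parent--child pair in $T$ with bins $i$ and $j$, the parent's round $t$ is constrained by $\{i,j\}\subseteq H_t$, and the number of admissible $t$ is at most $\viz{i,j} \leq s\, n^{1-\varepsilon}$. Combined with the size property $|\Ec_t|\leq n^{c_0}$ (which keeps the per-round edge enumeration polynomial), this discount of a factor $n^{-\varepsilon}$ per edge of $T$, applied to all $\Theta(\log n)$ edges, is exactly what the additive $\Oh(1/\varepsilon)$ in $k^{\ast}$ is designed to absorb, and (number of trees) $\times$ (probability per tree) becomes $o(n^{-c})$. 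The main obstacle is the bookkeeping required to make this rigorous: one must define a \emph{canonical} witness tree that avoids double-counting of balls, cleanly separate the combinatorial enumeration over rounds from the probability within each round, and handle the possibility of repeated vertices in the tree (which a pruning argument, also based on visibility, shows to be negligibly rare).

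The ``Moreover'' clause then follows by a layered argument: partition the $\gamma m$ balls into $\gamma$ consecutive phases of $m$ balls each, and apply the main bound within each phase to the height added by that phase alone. Existing load from earlier phases merely shifts the baseline without interfering with the witness-tree recursion, so each of the $\gamma = \Oh(1)$ phases contributes at most $\log_d \log n + \Oh(1/\varepsilon)$ w.h.p.; a union bound over phases gives the stated $\gamma(\log_d \log n + \Oh(1/\varepsilon))$ total.
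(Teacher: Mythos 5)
Your proposal captures the general witness-tree shape, and you correctly sense that the $\Oh(1/\varepsilon)$ in the height bound must come from visibility, but the mechanism you describe for extracting the savings is incorrect, and this is the crux of the proof. Your central claim is a discount of $n^{-\varepsilon}$ per edge of the tree. That fails whenever the child bin $j$ is \emph{fresh} (i.e.\ to be enumerated because it has not appeared earlier in the tree): in the first-moment sum one must sum over all admissible $j$, and
$\sum_{j}\sum_{t}\Pr{\{i,j\}\subseteq D_t}=\sum_{t}(d-1)\Pr{i\in D_t}\le (d-1)\beta d$,
a constant with no $n^{-\varepsilon}$ factor. Visibility only helps when $j$ is already \emph{determined} by an earlier part of the tree, because only then does $\viz{i,j}$ bound the number of admissible rounds for a fixed pair. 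This distinction is precisely what the paper's blue--red colouring of the conflict tree encodes: only red vertices (those meeting $\geq 2$ previously seen bins) earn a $n^{-\varepsilon/2}$ factor, and Lemma~\ref{lem:col} shows that w.h.p.\ a $c$-loaded tree has only $r=\Oh(1/\varepsilon)$ red vertices. The additive $\Oh(1/\varepsilon)$ in the final load bound comes from that cap on the number of red vertices, not from $\Theta(\log n)$ edges each ``absorbing'' an $n^{-\varepsilon}$ — which would give a superfluous $n^{-\Theta(\varepsilon\log n)}$ and would also render the extra $\Oh(1/\varepsilon)$ levels unnecessary.

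Because of this, your argument is missing the ingredient that actually kills the enumeration over fresh nodes. In the paper this is $c$-loadedness together with $(\alpha,m)$-uniformity of the process (Lemma~\ref{lem:uni}): a blue vertex introduces $d-1$ fresh bins, each of which must already hold at least $c$ balls, and uniformity bounds that event by $(\e\alpha/c)^{c(d-1)(k-r-1)}$; this is the exponentially small term that does the real work. Relatedly, your claimed per-tree probability ``on the order of $(d/n)^{dN}$'' is not right in the hypergraph setting: for a fixed $d$-set $S$ one only gets $\sum_t \Pr{D_t=S}=\Oh(\beta d^d/s^{d-1})$, which exceeds $(d/n)^{d}$ by a factor of roughly $(n/s)^{d-1}$ unless $s\approx n$, so the ``correction terms'' you propose to absorb with a large $C$ are polynomial in $n$, not constants. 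Finally, for the ``Moreover'' clause, your claim that earlier load ``merely shifts the baseline'' is not self-evident: the post-phase-$1$ load profile is non-uniform and the witness-tree argument is not translation-invariant in that load. The paper handles this with a coupling to an auxiliary process started with uniform load $\ell^\ast$ in every bin and an inductive domination argument (Appendix~\ref{app:missmain}); that step, or something equivalent, is needed.
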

	
	\begin{remark}
We are particularly interested in the case that $\varepsilon$ is constant,
and in this case we assume that $d=o(\log n)$.
	Note that when $d=\Omega(\log n)$, a constant upper bound on the maximum load
	is obtained by~\cite{God08}.
		The size property is mainly assumed for  technical reasons.  
			For instance,  $|\Ec_t|\le \text{poly}(n)$ is not necessary.  
			Roughly speaking, balanced allocation on a dynamic hypergraph with large $|\Ec_t|$ 
			resembles the 
			standard balls-into-bins process. So it might be possible that   having more structural information about a dynamic hypergraph would enable us  to extend our result to allow an arbitrary number of edges $|\Ec_t|$.  Another possible extension of
			Theorem~\ref{thm:d-choice} would be to allow
			$\unif$ to be a function of $d$. 
	\end{remark}

\underline{Proof Technique:}\ In the literature on balanced allocation, 
	 three main techniques have been applied in order to analyse 
	$d$-choice processes in the lightly loaded case, namely: layered induction, differential equations and witness trees. The first two methods make heavy use of the fact
	 that the selection of bins during the allocation process follows a uniform probability distribution over bins, which may not be the case in our model. Our main result  (i.e., Theorem~\ref{thm:d-choice}) is based on the witness tree technique, which has been applied in 
 \cite{Aamand18,Dahlgaard16,KP06,Mitzenmacher00thepower,V03}.
 The standard $2$-choice process simulates a random graph process, where any 
$2$-element subset of bins chosen by a ball can be viewed as a random edge chosen from a complete graph. Kenthapadi and Panigrahi~\cite{KP06} replaced the underlying complete graph by a dense graph, say $G$, and each ball chooses a random edge from the graph. In this model, after allocating all $n$ balls, the union of the chosen random edges builds a random subgraph of $G$. They showed that, with very high probability,  there does not exist a connected and random subgraph of size $\Omega(\log n)$ whose nodes (bins) each contain a constant  number of balls.   
  A rooted spanning tree contained in the  subgraph  is the  witness structure. Now, if the maximum load is higher  than a certain threshold,  then a deterministic construction yields the witness graph, which is a rooted tree. Thus, the maximum load is bounded from above by the threshold. In this work, although we follow the same steps as~\cite{KP06}, the witness structure is not as straightforward since the underlying structure is a sequence of  hypergraphs, $(\H^{(1)},\ldots, \H^{(n)})$, and each ball chooses $d\ge 2$ bins.
Here, the witness structure is a $d$-uniform hypergraph, say $H=(V, E)$, where (1) each node (bin) in $V$ contains a constant number of balls, (2) $|V|=\Omega(\log n)$  and (3)  $H$ has an expansion
property which means  there is an ordering of the hyperedges of $H$ so that 
{with respect to this ordering, all but a constant number of hyperedges} %but a constant number of them in this ordering 
only shares one bin with the union of the
 preceding hyperedges.
 Applying the visibility condition we conclude that, with high probability, 
there does not exist a structure satisfying these three properties. % does not exist. 
Assuming a maximum load higher than a certain threshold, we recursively build the witness structure and the proof follows.

\bigskip

The following theorem presents a lower bound for the maximum load attained by the balanced allocation on some dynamic hypergraphs in terms of $\varepsilon$-visibility.
The proof is given in Section~\ref{sec:lower-bound}.

	\begin{theorem}\label{thm:lower-bound}
Let $\varepsilon = \varepsilon(n)\in (0,1)$ satisfy $n^{\varepsilon}\to\infty$, and define 
$s=s(n)=n^{\varepsilon}$. Let $d$ be a fixed integer which satisfies $2\leq d\leq s$. 
		Then there exists a sequence of $s$-uniform hypergraphs, say $(\H^{(1)},\ldots, \H^{(n)})$, which satisfies the balancedness condition and %(trivially) satisfies 
		the $\varepsilon$-visibility condition. 
		Suppose that the balanced allocation process on 
			%for every pair $\{i,j\}\subset[n]$ we have  $\viz{i,j}\le sn^{1-\varepsilon}$. Fix a constant $2\le d\le s$ and suppose that the balanced allocation process on 
			$(\H^{(1)},\ldots, \H^{(n)})$ has allocated $n$ balls. Then 
			the maximum load is at least 
			$\min\{\Omega(1/\varepsilon),\, \Omega(\log n /\log\log n)\}$
			with probability $1-n^{-\omega(1)}$.
	\end{theorem}
	
	\subsubsection*{Balanced Allocation on Dynamic Graphs}
	
	A dynamic graph is a special case of a dynamic hypergraph, where $s=s(n)=2$ for all $n$.
	Write $(G^{(1)},\ldots, G^{(n)})$ to denote a dynamic graph,
	where $G^{(t)}=([n],E_t)$ for $t=1,2,\ldots, n$.
	Theorem~\ref{thm:d-choice} does not cover the case of graphs ($s=2$),  due to the
	size property.
	We will prove a result on balanced allocation for regular dynamic graphs.
	
	\begin{definition}[Balanced Allocation on Dynamic Graphs]
		Suppose that $(G^{(1)}, \ldots, G^{(n)})$ is a regular dynamic graph  on
		vertex set $[n]$.
		The balanced allocation algorithm on $(G^{(1)},\ldots, G^{(n)})$ 
		proceeds in rounds $(t=1,\ldots, n)$. In each round $t$, the $t$-th ball chooses an edge of
		$G^{(t)}$ uniformly at random, and the ball is then placed in one of the bins incident to the edge with a lesser load, with ties broken randomly.
	\end{definition}
	
	Say that the dynamic graph is \emph{regular} if $G^{(t)}$ is $\Delta_t$-regular
	for some positive integer $\Delta_t$ and all $t=1,2,\ldots, n$.
	For every pair of distinct bins $\{i,j\}\subset [n]$, we will assume that
	the visibility $\viz{i,j}$ satisfies
	\[
	\viz{i, j}=|\{t\in \{1,2,\ldots, n\} \mid  \{i,j\}\in E_t\}|\le 2n^{1-\varepsilon}
	\]
	for some $\varepsilon = \varepsilon(n) \in(0, 1)$. 
	This matches the definition of $\varepsilon$-visibility from
	Definition~\ref{def:visibility} for the graph case.

	\begin{theorem}\label{thm:s2c}
		Let $(G^{(1)},\ldots, G^{(n)})$ be a sequence of  regular  graphs which satisfies
		the $\varepsilon$-visibility condition, for some 
		$\varepsilon = \varepsilon(n)\in (0,1)$ such that $n^\varepsilon\to\infty$.
		Suppose that the balanced allocation process on 
		$(G^{(1)}, \ldots, G^{(n)})$ has allocated $n$ balls.  
		Then the maximum load is at most $\log_2\log n+\Oh(1/\varepsilon)$,
		with very high probability. 
	\end{theorem}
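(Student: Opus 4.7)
The plan is to adapt the witness tree argument that proves Theorem~\ref{thm:d-choice} to the graph setting. Theorem~\ref{thm:d-choice} does not apply directly because the size property requires $s=\Omega(\log n)$, whereas here $s=2$. The upside is that the graph model is a rigid special case with $s=d=2$: once a ball picks its edge, its two candidate bins are fully determined, so the proof becomes a self-contained analysis on complete binary trees whose per-node probabilities are directly controlled by $\varepsilon$-visibility.

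First I would execute the deterministic stage. Assuming some bin has load $h$ after the process, I would trace back through time to build a complete binary witness tree of depth $k = h - h_0$, where $h_0 = \Theta(1/\varepsilon)$ will be fixed in the probabilistic step. Each node $v$ carries a time stamp $t_v$ and an edge $e_v = \{i_v, j_v\} \in E_{t_v}$, and represents a ball that was placed in $i_v$ when $j_v$ also had load at least $\text{height}(v) - 1$. The two children of $v$ are the earlier balls whose placements first brought $i_v$ and $j_v$ up to the requisite load (using a canonical tie-breaking rule to make them well-defined). Following the standard witness-tree methodology, I would prune the tree so that distinct tree nodes correspond to distinct balls and impose a truncation at depth $h_0$ to guarantee that the resulting subtree is combinatorially \emph{proper}, meaning nodes at the bottom level are placed far enough apart in time that the edge-choice events of a proper tree are independent.

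Next I would perform the probabilistic union bound. Because $G^{(t)}$ is $\Delta_t$-regular, the probability that the ball arriving at time $t$ selects a specified edge $e \in E_t$ equals $2/(n\Delta_t) \le 2/n$. Summing over admissible time stamps at a single node $v$ and using $\viz{i_v,j_v}\le 2n^{1-\varepsilon}$ gives a per-node weight of at most $4 n^{-\varepsilon}$. There are at most $O(n)$ ways to extend each edge label by one new bin when moving from a parent to a child, so a tree of depth $k$ admits at most $n^{O(2^{k})}$ vertex-labellings. Multiplying this combinatorial count against the per-node probability raised to $\Theta(2^k)$ shows that the expected number of proper witness trees is $o(1)$ as soon as $2^{k}\ge c_1 \log n / \varepsilon$ for a suitable $c_1$. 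Translating back, this forces $h \le \log_2 \log n + O(1/\varepsilon)$ with high probability, matching the stated bound.

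The main obstacle will be the deterministic stage: producing a properly pruned, fully labelled binary witness tree whose per-node edge-selection events are genuinely independent. Collisions between tree nodes (two nodes corresponding to the same ball, or child edges sharing both endpoints with an ancestor) break the product structure of the probability bound, and the scarcity of randomness in the graph case ($s=d=2$ leaves no extra choice within an edge) makes such collisions harder to rule out than in the hypergraph proof of Theorem~\ref{thm:d-choice}. I expect that the additive $O(1/\varepsilon)$ slack, corresponding to the truncation depth $h_0$, is exactly what is needed to absorb these collision cases: one can either charge them to a rare low-probability event of a repeated bin or discard the affected subtree and relocate the root of the witness at a slightly higher node, losing only an additive constant in the final height bound.
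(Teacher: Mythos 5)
Your high-level plan (trace back from a heavy bin, build a binary witness tree, bound its probability, then a union bound) is the right template, but the probability step as written does not go through, and the way it fails points at the feature that makes the graph case delicate. You claim a per-node weight of $4n^{-\varepsilon}$ by fixing the edge $e_v=\{i_v,j_v\}$ at each node, bounding the per-round probability by $2/n$ via regularity, and summing over the $\le 2n^{1-\varepsilon}$ rounds in which $\{i_v,j_v\}$ can appear. But in the witness tree a child node always introduces a \emph{new} bin (the sibling of the parent's bin in the child's edge), so the child's edge is not determined in advance: to fix $j_v$ you must union over up to $n$ choices. Your own estimate of $n^{O(2^k)}$ vertex-labelings reflects exactly this, and multiplying $n^{O(2^k)}$ against $(4n^{-\varepsilon})^{\Theta(2^k)}$ gives $n^{(1-\varepsilon)\,\Theta(2^k)}$, which \emph{grows} for every $\varepsilon\in(0,1)$. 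Visibility only helps when both endpoints of the queried edge are already present in the structure; for genuinely tree-like edges that condition fails, and the $n^{-\varepsilon}$ gain is wiped out by the $n$-fold choice of the new endpoint.

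The paper resolves this by separating the two sources of smallness. It works in a conflict graph $\C_n$ on the set of \emph{bins} (edges are the pairs chosen by balls). For a $k$-vertex tree in $\C_n$, the per-edge, per-round probability is $2/n$ by regularity, which after summing over times is not small at all; the decisive smallness instead comes from requiring the tree to be $c_1$-loaded with $c_1$ large: by the $(\alpha,m)$-uniformity-style argument this event has probability at most $(2\e/c_1)^{c_1 k}$, and taking $c_1=12(c+1)$ and $k=\log n$ gives a polynomially small bound (Lemma~\ref{lem:tree}). Visibility then enters only in Lemma~\ref{lem:treecycle}, to bound the number of \emph{cycle-producing} edges --- edges whose \emph{both} endpoints already sit in the tree --- since for those the pair is fixed and visibility applies cleanly, yielding at most $2(c+1)/\varepsilon$ such edges with high probability. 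Finally, the $\Oh(1/\varepsilon)$ in the load bound is accounted for in the construction of the witness graph: starting from a bin of load $\ge\ell+c_1+c_2+1$ with $c_2=2(c+1)/\varepsilon$, one examines $c_2+1$ potential subtrees rooted at the children of the heavy bin; because there are at most $c_2$ cycle-producing edges, at least one subtree recurses without collisions and produces a bona fide $c_1$-loaded tree with $2^\ell$ vertices, contradicting Lemma~\ref{lem:tree} at $\ell=\log_2\log n$. Your instinct about the role of the $\Oh(1/\varepsilon)$ slack and about collisions between tree nodes is the right intuition, but it needs to be implemented by this split (loadedness drives the tree bound, visibility drives the collision count) rather than by using visibility for the per-node weight.
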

	
	The proof, which can be found in Section~\ref{sec:graph}, is again based on the witness tree technique.  We remark that Theorem~\ref{thm:s2c} can be extended to the case where the 
	dynamic graph is \emph{almost regular}, meaning that the ratio of the minimum and maximum 
	degree of $G^{(t)}$ is bounded above by an absolute constant for $t=1,\ldots, n$.

	\subsection{Dynamic Graphs and Hypergraphs with Low Pair Visibility}
	
	In order to show the ubiquity of the visibility condition, we will describe some dynamic graphs with low pair visibility. One can easily construct a dynamic hypergraph from a dynamic graph by considering the $r$-neighborhood of each vertex of the $t$-th graph as a hyperedge in the $t$-th hypergraph, for $t=1,\ldots, n$.
	
	\medskip
	
	\begin{itemize}
		\item \underline{\sf Dynamic Cycle.}
			For $t=1,\ldots, n$ define the edge set
			\[
			E_t=\{\{i, j\}\subset \{0,\ldots,n-1\} \mid  
			j = i + \lceil t/\sqrt{n} \rceil \, (\operatorname{mod}\,  n) \,\, \text{or}\,\, 
			i =  j + \lceil t/\sqrt{n} \rceil\, (\operatorname{mod}\, n) \},
			\]
			where calculations are performed modulo $n$ (that is, in the additive group $\mathbb{Z}_n$).
			In modular addition, for every pair $\{i, j\}\subset \{0,\ldots, n-1\}$, the equation 
			$i =  j+ k\, (\operatorname{mod}\, n)$ has at most one solution $1\le k\le \sqrt{n}$ and hence 
			\[
			\viz{i, j}=|\{t\in \{1,2,\ldots, n\}\mid  \{i,j\}\in E_t\}|\le \sqrt{n}.
			\]
			Now $C^{(t)} = (\{0,1,\ldots, n-1\},E_t)$ is 2-regular, so it is either a Hamilton
			cycle or a union of two or more disjoint cycles
			(depending on whether $t$ and $n$ are coprime).
			By Theorem~\ref{thm:s2c}, w.h.p.\ the maximum load attained by the algorithm on $\{C^{(t)}, t=1,\ldots,n\}$ is at most $\log_2\log n+\Oh(1)$. 
				The analysis of the balanced allocation algorithm on $\Delta$-regular graphs given by 
				Kenthapadi and Panigrahy~\cite{KP06} showed that the balanced allocation process
				on arbitrary $\Delta$-regular graphs has maximum load $\Theta(\log \log n)$ only when
	$\Delta = n^{\Omega(1/\log\log n)}$.  By contrast, here each $C^{(t)}$ has degree at most $2$,
			but 
			the visibility condition keeps the maximum load as low as the standard 
			two-choice process.
			
			\smallskip
			
			\begin{remark}
				By Theorem~\ref{thm:s2c}, w.h.p.\ the balanced allocation process on  the dynamic cycle  achieves  the maximum load at most  $\log_2\log n+\Oh(1)$. 
	Since $|E_t|=n$ for $t=1,\ldots, n$, each ball requires $\log_2 n$ 
		random bits. However, in the standard power-of-two-choices process, each ball chooses two independent and random bins, which requires $2\log_2 n$ random bits.
				Therefore, the dynamic cycle can be  used to reduce (by half) the number of
				random bits required in the standard two-choice process. 
			\end{remark}
		
		\medskip
		\item\underline{\sf Dynamic Modular Hypergraph.}
		Suppose that $n$ is a prime number and  fix $s=s(n)$  such that $\log n\leq s\leq n^{1/5}$. (Here $n$ is large enough so that this range is non-empty.)
		For $t=1,\ldots,n$, let $k_t=\lceil \sqrt{n} \rceil + \lceil \frac{t}{n^{3/4}} \rceil$ and 
		for each $\alpha\in \mathbb{Z}_n$
		define 
		\[
		H_t(\alpha)=\{ \, \alpha + j k_t\, (\operatorname{mod}\, n) \mid  j=0,1,\ldots,s-1 \, \}.
		\]
		Then $H_t(\alpha)$ is a subset of $\mathbb{Z}_n$ of size $s$, as $n$ is prime.
		Now for each $t=1,\ldots,n$ we define the dynamic $s$-uniform hypergraph $\H^{(t)}=(\mathbb{Z}_n, \mathcal{E}_t)$, 
		where 
		$\mathcal{E}_t=\{H_t(\alpha) \mid \alpha\in \mathbb{Z}_n\}$.  Then $\H^{(t)}$ is $s$-regular, and hence
		1-balanced, and it satisfies the 1-size property as $|\mathcal{E}_t|= n$.
		Suppose that $\{\beta_1, \beta_2\}\subset H_t(\alpha)$ for some $\alpha\in \mathbb{Z}_n$,
		{with $\beta_1\neq \beta_2$}. 
		Then there exists $j_1,j_2\in \{0,\ldots, s-1\}$ such that 
		$\beta_1 = \alpha  + j_1 k_t\, (\operatorname{mod}\, n)$ and 
		$\beta_2 = \alpha + j _2 k_t\, (\operatorname{mod}\, n)$. 
		Thus, $\beta_2 - \beta_1 =  (j_2-j_1) k_t\, (\operatorname{mod}\, n)$. Note that $j_1, j_2$ must
			be distinct as $\beta_1,\beta_2$ are distinct.
		Next suppose that $k_{t_1}\neq k_{t_2}$ for some $t_1,t_2\in\{1,\ldots, n\}$, and take any $j_1,j_2\in \{1,\ldots, s-1\}$.  
			{By definition of $k_t$} and working in $\mathbb{Z}$, we see that
			\[ 1\leq | j_2 k_{t_2} - j_1 k_{t_1} | \leq (s-1)\big( \lceil\sqrt{n}\rceil + \lceil n^{1/4} \rceil\big) < n,
			\]
		and it follows that 
		\begin{equation}
		\label{no-mod}
		j_1k_{t_1}\neq j_2k_{t_2}\, (\operatorname{mod}\, n).
		\end{equation}
		{Finally, suppose that some distinct $\beta_1,\beta_2$ satisfy
			$\{\beta_1, \beta_2\}\subset H_{t_1}(\alpha)\cap H_{t_2}(\alpha)$ where $k_{t_1}\neq k_{t_2}$.
			Then $\beta_2 - \beta_1 = j k_{t_1}\, (\operatorname{mod}\, n)$ for some $j_1\in\{1,\ldots, s-1\}$, 
			and $\beta_2 - \beta_1 = j_2 k_{t_2}\, (\operatorname{mod}\, n)$ for some $j_2\in \{ 1,\ldots, s-1\}$,
			but this contradicts (\ref{no-mod}).
		}
		Therefore, by definition of $k_t$, for every $\{\beta_1, \beta_2\}\subset\mathbb{Z}_n$, we have
		\[
		\viz{\beta_1, \beta_2}=|\{t\in \{1,2,\ldots, n\}\mid  \{\beta_1, \beta_2 \}\subset H_t(\alpha) ~\text{for some}~ \alpha\in \mathbb{Z}_n\}| = \Oh(n^{3/4}).
		\]
		
		\item \underline{\sf Stationary Geometric Mobile Network.} 
		Fix a constant integer $R\geq 2$.
		Consider an $R$-dimensional torus $\Gamma(n, R)$, which is a graph whose
		vertex set is the Cartesian product of $\mathbb{Z}_\ell^R={\mathbb{Z}_\ell\times\ldots\times\mathbb{Z}_\ell}$, where $\ell=n^{1/R}\in\mathbb{Z}$, and two vertices 
		$(x_1,\ldots,x_R)$ and $(y_1,\ldots,y_R)$ are connected if for some $j\in \{1,2\ldots,R\}$ $x_j=y_j\pm 1$ mod $n$ and for all $i\neq j$ we have $x_i=y_i$. 
		Let $\pi$ be the stationary distribution of
		the following random walk on $\Gamma(n, R)$: at each step, the walker stays at the
		current vertex with probability $p$, and otherwise chooses a neighbour randomly
		and moves to that neighbour.  
		The transition probability from vertex $u$ to a neighbouring vertex $w$ is $(1-p)/(2R)$, where $2R$ is the degree of vertex $u$ in $\Gamma(n,R)$.
		Now place $n$ agents on vertices of $\Gamma(n,R)$ independently, each according to the distribution $\pi$.  
		At each time step, each agent independently performs a step of the random walk
		described above (For random walks on a torus we refer the interested reader to \cite{LPW06}). For every pair of distinct agents $a$ and $b$, let $d_t(a,b)$ denote the Manhattan distance (in $\Gamma$) of the locations of $a$ and $b$ at time $t$. For a given $r\ge 1$, we define the \emph{communication graph process} $\{G^{(t)}_r \mid  t=0,1,\ldots\}$ over the set of agents, say $A$, so that for every $t\ge 0$, agents $a$ and $b$ are connected if and only if $d_t(a,b)\le r$. The model has been thoroughly studied when $R=2$ in the context of information spreading~\cite{CMPS11}.
		We present the following result regarding the pair visibility of the communication
		graph process, proved in Section~\ref{sec:pop}. 
		
		\smallskip
		\begin{proposition}\label{pro:gmn}
			Fix $r=r(n) = n^{o(1)}$. Also let $\{G_r^{(t)}=(A, E_t) ~|~ 1\le t\le n\}$ be the communication graph process defined on an $R$-dimensional torus $\Gamma(n, R)$, where $R\geq 2$ is a fixed integer. Then there exists constant $\varepsilon> 0$ such that for every pair of agents, say $\{a, b\}\subset A$,
			\[
			\viz{a, b}=|\{t \in \{1,2,\ldots, n\} \mid  \{a,b\}\in E_t\}|=\Oh(n^{1-\varepsilon}).
			\]
		\end{proposition}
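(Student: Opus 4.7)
The plan is to exploit the stationarity of the walk. Since each agent is sampled from $\pi$ (which is uniform on $\Gamma(n,R)$ because the torus is vertex-transitive and the lazy walk is symmetric), and the walk preserves $\pi$, we have $X^a_t\sim\pi$ at every time $t$ for every agent $a$. Because the walks of distinct agents are independent, for any pair $\{a,b\}$ the difference process $D_t(a,b):=X^a_t-X^b_t\in\mathbb{Z}_\ell^R$ is itself uniform on $\mathbb{Z}_\ell^R$ at every time and evolves as a symmetric lazy random walk on $\mathbb{Z}_\ell^R$ (its one-step increment being the difference of two independent lazy torus increments). Because $R$ is a fixed constant and $r=n^{o(1)}$, the Manhattan ball of radius $r$ in $\mathbb{Z}_\ell^R$ contains at most $Cr^R=n^{o(1)}$ vertices, so
\[\Pr{\{a,b\}\in E_t}=\Pr{|D_t(a,b)|_1\le r}\le n^{-1+o(1)}\]
for every $t$, and hence $\Ex{\viz{a,b}}\le n^{o(1)}$.

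To upgrade this first-moment estimate to a bound that holds simultaneously for every pair with high probability, I would control higher moments of $\viz{a,b}$ using the Markov property of $D_t$. Iterating the Markov property along ordered times $t_1<\cdots<t_k$ and using translation invariance gives
\[\Pr{|D_{t_i}|_1\le r\ \forall i}\ \le\ \Pr{|D_{t_1}|_1\le r}\prod_{i=1}^{k-1}\max_{|x|_1\le r}\Pr{|D_{t_{i+1}-t_i}|_1\le 2r\mid D_0=x}.\]
The transition probabilities on the right are then bounded by standard heat-kernel / local CLT estimates for the lazy walk on $\mathbb{Z}_\ell^R$: they decay like $O(r^R\tau^{-R/2})$ in the Gaussian regime $\tau\ll\ell^2$ and like $O(r^R/n)$ in the near-uniform regime $\tau\gg\ell^2$, where $\ell^2=n^{2/R}$ is the order of the mixing time. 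Summing these bounds over $\tau\in\{1,\ldots,n\}$ yields
\[\sum_{\tau=1}^{n}\max_{|x|_1\le r}\Pr{|D_\tau|_1\le 2r\mid D_0=x}\ \le\ n^{1-\delta}\]
for some constant $\delta=\delta(R)>0$, whence $\Ex{\viz{a,b}^k}\le k!\cdot n^{o(1)}\cdot n^{(k-1)(1-\delta)}$. Markov's inequality with a sufficiently large constant $k$ then gives $\Pr{\viz{a,b}>n^{1-\varepsilon}}\le n^{-C}$ for any desired constant $C$ whenever $\varepsilon<\delta$, and a union bound over the $\binom{n}{2}$ pairs finishes the proof.

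The main obstacle, and what pins down the final $\varepsilon$, is the one-dimensional case $R=1$: the lazy walk on $\mathbb{Z}_n$ does not mix within $n$ steps, so only the Gaussian regime applies, and the best available estimate is $\Pr{|D_\tau|\le 2r\mid D_0=x}=O(r\tau^{-1/2})$ for $\tau\ge r^2$. Summing gives $O(r^2+r\sqrt n)=O(n^{1/2+o(1)})$, yielding $\delta=1/2$, which is already enough to produce some constant $\varepsilon>0$ (e.g.\ $\varepsilon=1/4$ with $k=8$). For $R\ge 2$ the mixing is much faster and substantially smaller exponents are available. A minor technical point is that the local CLT has to be quoted in a form valid uniformly across the crossover scale $\tau\asymp\ell^2$ between the Gaussian and near-uniform regimes, but such uniform transition-kernel estimates for the torus walk are standard (see e.g.~\cite{LPW06}).
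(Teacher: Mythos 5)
Your proposal is correct and rests on a genuinely different mechanism than the paper's. The paper projects both walks onto a single coordinate, introduces a moving window of width $\Theta(n^\delta)$ with $\delta=\min\{1/4,1/R\}$ that follows one agent, couples the dynamics inside the window to a random walk on a small cycle, and then invokes the Markov-chain Chernoff bound of Chung--Lam--Liu--Mitzenmacher (\cite{CLLM12}); this gives $\viz{a,b}=\Oh(rn^{1-\delta})$ and ultimately $\varepsilon=\delta/2=\min\{1/8,1/(2R)\}$. You instead work directly with the difference walk $D_t$ on the full torus $\mathbb{Z}_\ell^R$, observe that it is a lazy, symmetric, stationary-uniform random walk with i.i.d.\ increments (so that the return-probability factorization for ordered times $t_1<\cdots<t_k$ is exact), and control $\Ex{\viz{a,b}^k}$ via the heat-kernel bound $\|p_\tau\|_\infty\lesssim\tau^{-R/2}$ in the pre-mixing regime and $\|p_\tau\|_\infty\lesssim 1/n$ afterwards. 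This is more elementary --- no coupling construction, no specialist concentration inequality for Markov chains --- and it yields a quantitatively sharper exponent: your $\delta$ is $1/2-o(1)$ uniformly in $R$ (the one-dimensional Gaussian regime $\sum_{\tau\le n}\tau^{-1/2}\asymp\sqrt{n}$ being the bottleneck), so you can take any $\varepsilon<1/2$, versus the paper's $\varepsilon\le 1/8$. The one thing you should make explicit if writing this up is the form of the local CLT you are quoting: you need a bound $p_\tau(0,y)\lesssim\min\{\tau^{-R/2},1/n\}$ valid for all $\tau\ge 1$ and uniform in $y$, covering the crossover $\tau\asymp \ell^2$; this is standard for lazy nearest-neighbour walks on $\mathbb{Z}_\ell^R$ (e.g., via Fourier analysis on the torus or comparison with $\mathbb{Z}^R$), and the laziness and irreducibility of the difference-increment distribution should be checked, both of which hold here since the increment places positive mass on $0$ and on each $\pm e_i$. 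With that in place, your moment computation and the Markov-plus-union-bound conclusion are sound.
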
 
		
	\end{itemize}

	\section{ Balanced Allocation on  Hypergraphs }\label{sec:hyp}
	In this section we establish an upper bound for  the maximum load attained by the balanced allocation on hypergraphs (i.e., Theorem~\ref{thm:d-choice}). 
	In order to analyze the process  let us first define a \emph{conflict graph}. 
	We write $D_t$ for the set of $d$ bins chosen by the $t$-th ball,
	and sometimes refer to $D_t$  as the $d$-\emph{choice} of the $t$-th ball.
	We will slightly abuse the notation and write $D_u\cap D_t$,\,  $D_u\cup D_t$ to denote the 
	set of common bins, and the union of bins,  chosen by balls $u$ and $t$, respectively. 
	
	\begin{definition}[Conflict Graph]
		For $m=1,\ldots, n$, the conflict graph $\C_m$ is a simple graph with vertex set 
		$\{D_1,D_2,\ldots,D_m\}$. 
		Vertices $D_u$ and $D_t$ are connected by an edge in $\C_m$
		if and only if $D_u\cap D_t\neq \emptyset$  (that is, the $d$-choices of
		the $t$-th ball and the $u$-th ball contain a common bin).
	\end{definition}
	
	We say a subgraph of $\C_m$ with vertex set $\{D_{t_1},\ldots, D_{t_k}\}$ is $c$-\emph{loaded} if every bin in $D_{t_1}\cup D_{t_2}\cup \cdots \cup D_{t_k}$ has at least $c$ balls. 
	
	Our analysis will involve
	a useful combinatorial object, called an {\it ordered tree}. 
	An ordered tree is a rooted tree, together with a specified ordering of the children of every vertex.
	Recall that $\frac{1}{k+1}\, \binom{2k}{k}$ is the $k$-th Catalan number, which counts numerous combinatorial objects, including the number of ways to form $k$  balanced parentheses.
	It is well known~\cite{OEIS} that ordered trees with $k-1$ edges are 
	counted by the $(k-1)$-th Catalan number, leading easily to the following proposition.
	
	\begin{proposition}\label{pro:ordered}
		The number of $k$-vertex ordered trees is $\frac{1}{k}\,\binom{2k-2}{k-1}\leq 4^{k-1}$.
	\end{proposition}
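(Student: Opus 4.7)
The plan is to prove both parts of Proposition~\ref{pro:ordered} by exhibiting a standard bijection with balanced parenthesizations and then applying the reflection principle. First I would establish the bijection between $k$-vertex ordered trees and balanced parenthesizations of length $2(k-1)$, as follows. Given a $k$-vertex ordered tree, perform a depth-first traversal starting at the root, at each step descending along the leftmost unused edge from the current vertex (using the tree's ordering of children) or, if no unused edge remains, returning to the parent. Write an opening parenthesis each time an edge is traversed downward and a closing parenthesis each time an edge is traversed upward. Since the tree has $k-1$ edges and each edge is traversed exactly once in each direction, the resulting string has $k-1$ opens and $k-1$ closes, and the fact that we can only ascend from a vertex we have previously descended to guarantees the string is balanced. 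The inverse map reads the balanced string from left to right, building the tree so that each matched pair of parentheses corresponds to an edge and the ordering of children at a vertex is determined by the order in which descents from that vertex occur.

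Next, I would count balanced parenthesizations of length $2(k-1)$ via the standard lattice-path reformulation: such strings biject with paths from $(0,0)$ to $(2(k-1),0)$ using unit steps $(1,+1)$ and $(1,-1)$ that never dip below the $x$-axis (a Dyck path). The total number of unconstrained $\pm 1$ paths of length $2(k-1)$ ending at height $0$ is $\binom{2k-2}{k-1}$. By the reflection principle, those paths that do touch height $-1$ biject with paths from $(0,0)$ to $(2(k-1),-2)$, of which there are $\binom{2k-2}{k}$. Subtracting,
\[
\binom{2k-2}{k-1} - \binom{2k-2}{k} = \binom{2k-2}{k-1}\left(1 - \frac{k-1}{k}\right) = \frac{1}{k}\binom{2k-2}{k-1},
\]
which is the first equality claimed.

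Finally, for the upper bound I would just use the trivial estimate
\[
\frac{1}{k}\binom{2k-2}{k-1} \leq \binom{2k-2}{k-1} \leq \sum_{j=0}^{2k-2}\binom{2k-2}{j} = 2^{2k-2} = 4^{k-1},
\]
completing the proof. No step in this proof is a real obstacle: the result is classical and the only point requiring care is making the bijection unambiguous, which is why the \emph{ordering} of children at each vertex is essential (without it, the depth-first traversal would not produce a canonical parenthesization). Alternatively, one could avoid the bijection entirely by setting up the recurrence $T_k = \sum_{j=1}^{k-1} T_j T_{k-j}$ (decomposing an ordered tree by splitting off the subtree rooted at the leftmost child of the root from the ordered tree obtained by deleting that subtree), recognising it as the Catalan recurrence, and solving via generating functions; I would only fall back on this if the bijection required too much notational overhead for the paper's style.
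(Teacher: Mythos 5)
Your proof is correct. The paper does not actually prove this proposition; it simply cites the well-known fact that $k$-vertex ordered trees are counted by the $(k-1)$-th Catalan number (referencing OEIS A000108 and Knuth) and notes the bound follows. Your argument supplies the standard details behind that citation (DFS bijection to balanced parentheses, reflection principle, trivial binomial bound), so it is the same approach, just made self-contained.
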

	More information regarding  the enumeration of trees can be found in~\cite[Section~2.3]{Knuth}. 
	
	The following blue-red coloring will be very helpful in our analysis.
	
	\begin{definition}[Blue-red coloring]\label{def:br} Given $m\in \{1,2,\ldots, n\}$,
		suppose that $T\subset \C_m$ is a  rooted and ordered $k$-vertex tree contained in $\C_m$.
		Let the vertex set of $T$ be $\{D_{t_1}\ldots,D_{t_k}\}$, where $D_{t_1}$ is the root. 
		Perform depth-first search starting from the root, respecting the specified order of
		each vertex. For $i=1,\ldots, k$, let $D(i)\in\{D_{t_1}\ldots,D_{t_k}\}$ be the vertex
			which is the $i$-th visited vertex in the depth-first search.  Then $D(1)=D_{t_1}$ is the root.
		We now define a  blue-red coloring  $\col:\{D(2),\ldots, D(k)\}\rightarrow \{\text{blue, red}\}$ as follows. For $i= 2,\ldots, k$,
		\[
		\col(D(i))= \begin{cases}
		\text{blue} &  \text{ if }\,\, |\, (\cup_{j=1}^{i-1}D(j)) \cap D(i)|= 1,\\
		\text{red} &  \text{ if }\,\, |\, (\cup_{j=1}^{i-1}D(j)) \cap D(i)|\ge 2.
		\end{cases}
		\]
	\end{definition}
	
	The following key lemma presents a upper bound for the probability that a certain tree 
	can be found as a subgraph of $\C_m$.
	
	\begin{lemma}[Key Lemma]\label{lem:col} 
			Let $\beta,\, c_0\geq 1$ be constants and let 
				$\varepsilon = \varepsilon(n)\in (0,1)$ such
				that $n^\varepsilon\to\infty$. 
				Let $d=d(n)$ be an integer-valued function such that
				$2\leq d = o(\varepsilon\, \log n)$.
		Let $(\H^{(1)},\ldots, \H^{(n)})$ be a dynamic $s$-uniform hypergraph which satisfies the
			$\beta$-balanced,
			$\varepsilon$-visibility and $c_0$-size properties.
Suppose that $c \geq 44\beta \e^2$ is a sufficiently large constant
and $k=C\log n$ for some constant
		$C\geq 1$. There exists $\Omega(n)= m\le n$ such that the probability that $\C_{m}$ contains a $c$-loaded  $k$-vertex tree with $r$ red vertices in its blue-red colouring is at most
		\[
		n^{c_0+3}\, \exp\{4k\log({2\beta d})-r\varepsilon \log(n)/2-c(d-1)(k-r-1)\}.
		\]
		%where $r$ is the number of red vertices in the  blue-red coloring of the tree. 
		Moreover, with probability $1-n^{-\omega(1)}$ no such tree exists with $\omega(1/\varepsilon)$ red vertices. 
	\end{lemma}

	The proof, presented in Section~\ref{sub:exist}, involves an extension of the
	witness tree technique.  This method might be of independent interest in the study of random hypergraphs. 
	
	We now explain how to recursively build a witness graph if there exists a bin whose load is higher than a certain threshold.
	The {\it minimum load} of $D_t$ is the number of balls in the least-loaded bin in $D_t$
	(the set of $d$ choices of $D_t$).
	Clearly, if ball $t$ is placed at height $h$ then $D_t$ has minimum load at least $h$.

	\paragraph*{Construction of the Witness Graph.}{Suppose that there exists a bin with load $c+\ell+1$. Let $R$ be the $d$-choice corresponding to the ball at height $c+\ell$ in this bin. 
	When this ball arrived, the minimum load of $R$ was $c+\ell-1$. 
		We start building the witness tree in $\C_m$ whose root is $R$. 
		For every bin $i\in R$, consider the $\ell$ balls in bin $i$ at height $c+\ell- j$, 
		for $j=1,\ldots, \ell$, and let $D^{i}_{t_j}$ be the $d$-choice corresponding to the
		ball in bin $i$ with height $c+\ell-j$.  
		(Here $t_j$ is the time of arrival of the ball in bin $i$ at height $c+\ell-j$.)
		These $\ell$ balls exist as the minimum load
		of $R$ is $c+\ell$.
		We refer to the set $\{D^i_{t_j} \mid i\in R,\,\,  j=1\ldots \ell\}$ 
		as the set of \emph{children} of $R$, where the minimum load of $D^i_{t_j}$ is $c+\ell-j-1$.  All children of $R$ are connected to $R$ in  $\C_{m}$.  Order the children of
		$R$ arbitrarily, then blue-red colour the first level of the tree (the children of $R$).
		Recall that a vertex is colored by blue  if it only shares  one bin with its predecessors in the ordering. So a blue $d$-choice contains $d-1$ bins that have not appeared in
		previous $d$-choices (with respect to depth-first search, respecting the fixed ordering).
		We call these $d-1$ bins {\it fresh}.
		
		Next, consider each blue leaf of the tree (if any), and recover the $d$-choices corresponding to  balls that are placed in fresh bins with height at least $c$. Then, blue-red color the children of those $d$-choices, with respect to an arbitrary ordering. This recursion will continue until either there are no balls remaining with height at least $c$, or there are no blue leaves. 
		For $j=0,\ldots, \ell$,
		let $f(\ell-j)$ denote the number of $d$-choices that the recursive construction gives, assuming that the $d$-choice for the root has minimum load  $c+\ell-j-1$ and that all balls in the tree are blue. In this case, the recursive construction continues until no ball remains with height at least $c$. 
		We claim that 
		\[
		f(\ell)\ge (d-1)(f(\ell-1)+f(\ell-2)+\cdots + f(0)+1),
		\] 
		where $f(0)=1$. 
		To see this, suppose that bin $B$ contains $c+\ell$ balls, and consider
		the $\ell+1$ balls in this bin at height $c$, $c+1$, \ldots, $c+\ell$.  The ball at height $c+\ell-j$ corresponds to a $d$-choice which contains $d-1$ additional bins whose minimum load was at least $c+\ell+j-1$ at the time when ball $c+\ell-j$ arrived. Hence the ball at height $c+\ell-j$ in $B$ implies the existence of $(d-1)f(\ell-j-1)$ bins in the recursive construction, for $j=0,1,\ldots, \ell-1$, in addition to the $d-1$ fresh bins corresponding to $B$. These bins are all distinct, by the assumption that all vertices are blue. This implies the claimed recursive lower bound on $f(\ell)$. Then it follows by strong induction that
		{$f(\ell)\ge 2(d-1)d^{\ell-1}\geq d^\ell$}.
	}

	\begin{proof}[Proof of Theorem~\ref{thm:d-choice}]
		Let $(\H^{(1)},\ldots, \H^{(n)})$ be a dynamic hypergraph which satisfies
		the $\beta$-balanced, $\varepsilon$-visibility and $c_0$-size properties.
		By Lemma~\ref{lem:col}, there exists $\Omega(n)=m\le n$ such that
		the following holds with probability $1-n^{-\omega(1)}$: 
		after $m$ balls have been allocated by the balanced allocation process,
		if $T\subseteq \C_m$ is a $c$-loaded tree with $k$ vertices and $T$ is blue-red coloured
		according to some arbitrary ordering of the children of each vertex, then the number
		$r$ of red vertices satisfies $r=\Oh(1/\varepsilon)$.
		Let $c_2 = r/d = \Oh(1/\varepsilon)$.

		Now suppose that after allocating $m$ balls, there is a ball at height  $c_1+c_2 + \ell+1$, where $c_1$ is some sufficiently large but fixed positive integer. This implies that there is a $d$-choice, denoted by $R$, whose minimum load is at least $\ell+c_1+c_2+1$. Let us consider all balls placed in the bins contained in $R$
		with height at least $c_1 + \ell+1$. Recover the  corresponding $d$-choices for 
		these balls, say $D_1,D_2,\ldots, D_w$, then colour them blue-red with respect to the root $R$ and an arbitrary ordering of the children of each vertex. 
		Note that $w\ge c_2\cdot d$, as each bin in $R$ contains at least
		$c_2$ balls at height at least $c_1+\ell+1$.
		Suppose that among these $w$ there are $b\ge 1$ blue vertices and $w-b$ red vertices. We now consider  every blue vertex $D_t\in \{D_1,D_2,\ldots, D_w \}$ as a root and start  the recursive construction of the witness graph. Assuming that the number of red vertices is strictly less than $c_2\cdot d< w$,  it follows that at least one recursive construction (with root $D_i$) does not produce any red vertex. Moreover, the recursion from $D_i$ gives a $c_1$-loaded tree with at least $k=d^\ell$ vertices. We take $\ell = \log_d\log n$, so that $k=\log n$.
		Another application of Lemma~\ref{lem:col} implies that a $c_1$-loaded $k$-vertex tree 
		with no red vertices exists with probability at most 
			\begin{align*}  
			n^{c_0+3}\, \exp\{4k\log(2\beta d)-c_1(d-1)(k-1)\} &\leq 
			\exp\left\{ \big(c_0 + 4 + 4\log(2\beta d) - c_1(d-1)\big)\log n \right\}\\
			&\leq \exp\left\{ \big(c_0  + 4 + 4\log(4\beta) - c_1\big)\log n\right\},
			\end{align*}
			using the fact that $2\leq d = o(\log n)$ and $k=\log n$.
			Setting $c_1$ to be a large enough positive constant,
		we conclude that with very high probability the maximum load is at most 
		\[
		\ell + c_1 + c_2 = {\log_d\log n+\Oh(1)+c_2} = \log_d\log n+\Oh(1/\varepsilon),
		\]  
		since $c_2=\Oh(1/\varepsilon)$.
		This proves the first statement of Theorem~\ref{thm:d-choice}.
		%{The proof of the second statement is presented in Appendix \ref{app:missmain}.}

%%% The rest of the proof used to be Appendix D
	In order to prove the second statement of Theorem~\ref{thm:d-choice} we show the sub-additivity of the balanced allocation algorithm. 
	We want to prove that for every constant integer $\gamma \ge 1$ with $\gamma m \leq n$, after allocating $\gamma m$ balls, the maximum load  is at most $\gamma(\log_d\log n+\Oh(1/\varepsilon))$, with high probability. 
	First assume that $2m\leq n $ and suppose that the algorithm has allocated $m$ balls to $\H^{(t)}, t=1,\ldots,m$ and let $\ell^*\le \log_d\log n +\Oh(1/\varepsilon)$ denote its maximum load. 
	We now consider two independent  balanced allocation algorithms, say $\A$ and $\A_0$, on two dynamic hypergraphs starting from step $m$. 
	These dynamic hypergraphs are $(\H^{(m)},\ldots, \H^{(n)})$ and $(\H_0^{(m)},\ldots, \H_0^{(n)})$, where $\H_0^{(t)}$ is an identical copy of $\H^{(t)}$ for $t=m,\ldots, n$. 
	Moreover, we assume that in round $m$, all bins contained in $\H^{(m)}_0$ have 
	{exactly} $\ell^*$ balls.
	Let us couple algorithm $\A$ on $\H^{(t)}$ and algorithm $\A_0$ on $\H_0^{(t)}$. 
	Write $V=[n]$ for the set of $n$ bins.
	To do so, the coupled process allocates a pair of balls to bins as follows:
	for $t=m+1,\ldots, n$,  the coupling chooses
	a one-to-one labeling function \mbox{$\sigma_t: V\rightarrow \{1, 2,\ldots,n\}$} uniformly at random, where $V$ is the ground set of both hypergraphs (i.e, set of $n$ bins) and $\{1, 2,\ldots,n\}$ is a set of labels.
	Next, the coupling chooses $D_{t}$ randomly from $\H^{(t)}$. 
	{Let $D'_t$ denote the  same set of $d$ bins as $D_t$ in $\H_0^{(t)}$.}
	Algorithm $\A$ allocates ball $t+1$ to a least-loaded vertex of $D_{t}$,
	and algorithm $\A_0$ allocates ball $t+1$ to a least-loaded vertex of $D'_t$,
	with both algorithms breaking ties in favour of the vertex $v$ with the
	smallest load and minimum label $\sigma_t(v)$.
	Note that algorithm $\A$ is a faithful copy of the balanced allocation process on 
	$(\H^{(m)},\ldots, \H^{(n)})$, and algorithm $\A_0$ is a faithful copy of the 
	balanced allocation process on $(\H_0^{(m)},\ldots, \H_0^{(n)})$,
	respectively.  (This follows as $\sigma_t$ is chosen uniformly at random.)
	Let $X_i^{t}$ and $Y_i^{t}$, $m+1\le t\le 2m$, denote the load of bin $i$ in $\H^{(t)}$ and $\H^{(t)}_0$, respectively.
	We prove by induction that for every integer $m\le t\le 2m$ and $i\in V$ we have 
	\begin{align}\label{domin1}
	X_i^{t}\le Y_i^{t}.
	\end{align}
	The inequality holds when $t=m$ by the assumption that $Y_i^m=\ell^*$ for every $i\in V$.
	Let us  assume that for every $t'$ with $t'\le t< 2m$, Inequality (\ref{domin1})  holds,  then we will show it for $t+1$. Let  $i\in D_{t+1}$ and $j\in D'_{t+1}$ denote the  vertices (bins) that receive a ball in step $t+1$. We now consider two cases:
	\begin{itemize} 
		\item {\bf Case 1:} $X_i^{t}<Y_i^{t}$.  Since algorithm $\A$ allocated ball $t+1$ to bin $i$, it follows that  
		\[
		X_i^{t}+1=X_i^{t+1}\le Y_i^{t}\le Y_i^{t+1}.
		\]
		So, Inequality (\ref{domin1}) holds for $t+1$ and every  bin $i\in V$.
		\item {\bf Case 2:} $X_i^{t}=Y_i^{t}$.
		Since $D'_{t+1}$ is a copy of $D_{t+1}$, we have $j\in D_{t+1}$ and $i\in D'_{t+1}$. We know that no vertex (bin) in $D_{t+1}$ has smaller load than $i$, and 
		no vertex (bin) in $D'_{t+1}$ has smaller load than $j$.  Hence
		\[
		X_{i}^{t} \le X_{j}^{ t} \leq Y^{t}_{j}\le Y^{t}_{i}, 
		\]
		{where the middle inequality follows from the inductive hypothesis (\ref{domin1}) for bin $j$}.
		So by assumption of this case  we obtain $
		X_{i}^{t}=X_{j}^{ t}= Y^{t}_{j}=Y^{t}_{i}$.
		If  $i\neq j$ and  $\sigma_{t+1}(j)< \sigma_{t+1}(i)$, then it contradicts  the fact that ball $t+1$ is allocated to bin $i$ by algorithm $\A$. Similarly, if $\sigma_{t+1}(j)> \sigma_{t+1}(i)$, then it contradicts the fact that 
		{algorithm $\A_0$} allocated ball $t$ to bin $j$. Therefore $i=j$ and hence 
		\[
		X_i^{t+1} = X_{i}^{t}+1=Y^{t}_i+1=Y_i^{t+1}.
		\]
	\end{itemize}
	Thus, in both cases,  Inequality (\ref{domin1}) holds for every $t\ge 0$.
	By applying the first part of the theorem, with very high probability,
	using algorithm $\A_0$ to allocate $m$ 
	balls to the dynamic hypergraph $(\H_0^{(m)},\ldots, \H_0^{(2m)})$ results in maximum load 
	\[
	\ell^*+\log_d\log n+\Oh(1/\varepsilon)\le 2(\log_d\log n+\Oh(1/\varepsilon))
	\] 	
	{in $\H_0^{(2m)}$}.
	Therefore, by Inequality (\ref{domin1}), after using algorithm $\A$
	to allocate $m$ balls to the dynamic hypergraph $(\H^{(m)},\ldots, H^{(n)})$,
	with very high probability
	the maximum load in $\H^{(2m)}$ is at most  $2(\log_d\log n+\Oh(1/\varepsilon))$. 
	Applying  the union bound, we conclude that after allocating  $\gamma m$ balls, 
	where $\gamma m \leq n$,
	the maximum load  is at most $\gamma(\log_d\log n+\Oh(1/\varepsilon))$, 
	with very high probability.
This completes the proof of Theorem~\ref{thm:d-choice}.	
	\end{proof}

\section{Appearance Probability of a Certain Structure}\label{sub:exist}
	
	In this section we work towards a proof of Lemma~\ref{lem:col}. 
	First we will give a useful definition and prove some helpful results.
	The definition  was introduced in \cite[Definition~3]{PJ19}.
	
	\begin{definition}\label{def:uniform}
		Suppose that $\A$ is an allocation algorithm that sequentially allocates $n$ balls into $n$ bins according to some mechanism. 
		For a given constant $\alpha > 0$, and for $\Omega(n) = m \leq n$,
		we say  that $\A$ is $(\alpha, m)$-uniform if for every ball $1\le t\le m$ and every bin $i\in [n]$,
		\[
		\Pr{\,\text{ball $t$ is allocated to bin $i$ by $\A$} \mid \text{balls $1,2,\ldots, t-1$ have been allocated by  $\A$}\,}\le \frac{\alpha}{n}.
		\]
	\end{definition}
	{In the above definition, we condition on the allocations of balls $1,\ldots, t-1$ into bins made by~$\mathcal{A}$. }
	
\bigskip

%%%%%%%%%%%%%%%%%%%
%%%%%%%%%%%% This stuff used to be in a short appendix.

	Berenbrink et al.~\cite{BBFN12} proposed an allocation algorithm $\B$ such that 
	for $t=1,2,\ldots$, the $t$-th ball chooses an edge of $\H^{(t)}=([n], \Ec_t)$, uniformly at random, say  $H_t$.  The ball is then allocated to an empty vertex (bin) of $H_t$, with ties broken randomly.  
If $H_t$ does not contain an empty bin then the process fails.
	The next lemma follows directly from~\cite[Lemmas 4, 5]{BBFN12}. 
	
	\begin{lemma}\label{lem:balls}
Suppose that the dynamic $\unif$-uniform hypergraph $(\H^{(1)}, \ldots, \H^{(n)})$
satisfies the balancedness and size properties. There exists $m=\Theta(n)$ such that 
with probability at least $1-n^{-2}$, algorithm $\B$ successfully allocates $m$ balls and
there are at least $\unif/2$ empty vertices in $H_t$ for $t=1,\ldots, m$. 
\end{lemma}

	We now apply the above result to %prove Lemma~\ref{lem:empty}.
 show the same property holds for the balanced allocation on any dynamic hypergraph. 
	
	\begin{lemma}\label{lem:empty}
		Fix $d=d(n)$ with $2\leq d = o(\log n)$. 
		Suppose that the dynamic $\unif$-uniform hypergraph $(\H^{(1)}, \ldots, \H^{(n)})$
		satisfies the balancedness and size properties. 
		There exists $m = \Theta(n)$ with $m < n$ such that 
		with probability at least $1-n^{-2}$, 
		the edge $H_t$ chosen by the $t$-th ball contains at least
		$\unif/2$ empty vertices for $t=1,\ldots, m$.
	\end{lemma}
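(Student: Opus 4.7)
My plan is to reduce Lemma~\ref{lem:empty} to Lemma~\ref{lem:balls} via a coupling argument between the balanced allocation process and algorithm $\B$. I would run the two processes on the same realization of the dynamic hypergraph so that they choose the same edge $H_t$ in every round, and try to arrange their internal random choices so as to maintain the invariant $B_{\mathrm{bal}}(t)\subseteq B_\B(t)$ throughout the first $m$ rounds, where $B_{\mathrm{bal}}(t)$ and $B_\B(t)$ denote the sets of occupied bins at time $t$ under balanced allocation and under $\B$, respectively. Once this invariant holds, every bin of $H_t$ that is empty under $\B$ is also empty under balanced allocation, so the conclusion of Lemma~\ref{lem:balls} (which guarantees $|H_t\setminus B_\B(t-1)|\geq s/2$ for all $t\leq m$ with probability at least $1-n^{-2}$) transfers directly.

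For the inductive step of the coupling, I would fix round $t$ and assume $B_{\mathrm{bal}}(t-1)\subseteq B_\B(t-1)$. The balanced allocation samples a uniformly random $d$-subset $D_t\subseteq H_t$ and places its ball on a least-loaded vertex $b_{\mathrm{bal}}\in D_t$ (with uniform tie-breaking). If $b_{\mathrm{bal}}\in B_\B(t-1)$ then the invariant is preserved for free, while if $b_{\mathrm{bal}}\in H_t\setminus B_\B(t-1)$ I want to set $\B$'s chosen empty bin $b_\B$ equal to $b_{\mathrm{bal}}$, and complete the coupling by sampling $b_\B$ from a residual distribution in the remaining case so that the overall marginal of $b_\B$ remains uniform on $H_t\setminus B_\B(t-1)$ (as required by the definition of $\B$). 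The base case $t=0$ is trivial, and a union bound over $t=1,\ldots,m$ together with the exception probability from Lemma~\ref{lem:balls} then yields the desired probability at least $1-n^{-2}$.

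The main obstacle is verifying that the residual distribution in the coupling is well-defined, i.e., that $\Pr{b_{\mathrm{bal}}=i}\leq 1/|H_t\setminus B_\B(t-1)|$ for every $i\in H_t\setminus B_\B(t-1)$. Conditional on the high-probability event $|H_t\setminus B_\B(t-1)|\geq s/2$ from Lemma~\ref{lem:balls}, this reduces to showing $\Pr{b_{\mathrm{bal}}=i}\leq 2/s$. The crude estimate $\Pr{b_{\mathrm{bal}}=i}\leq \Pr{i\in D_t}=d/s$ suffices only in the case $d=2$. To handle general $d\geq 3$, I would introduce an auxiliary algorithm $\B'$ that first samples the same $d$-subset $D_t$ as balanced allocation and then places its ball uniformly on $D_t\setminus B_{\B'}(t-1)$ whenever this set is nonempty. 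Since $\B'$ is at least as restricted as $\B$, the argument proving Lemma~\ref{lem:balls} carries over to $\B'$ with essentially no changes, and the coupling $b_{\mathrm{bal}}\mapsto b_{\B'}$ then closes cleanly; finally $m=\Theta(n)$ is chosen small enough that $m\beta/n$ is a sufficiently small constant so that the union bound over $t=1,\ldots,m$ delivers the claimed tail bound.
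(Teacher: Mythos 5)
Your plan---couple balanced allocation with algorithm $\B$ so as to maintain $B_{\mathrm{bal}}(t)\subseteq B_\B(t)$ and transfer the conclusion of Lemma~\ref{lem:balls}---is the same route the paper takes, and the invariant and its inductive proof are set up correctly. The gap is in the verification that the coupling is well-defined. You rightly observe that making $b_\B$ marginally uniform on $E_t := H_t \setminus B_\B(t-1)$ requires $\Pr{b_{\mathrm{bal}}=j}\le 1/|E_t|$ for all $j\in E_t$, but the bound you reach for, $\Pr{b_{\mathrm{bal}}=j}\le \Pr{j\in D_t}=d/s$, works only for $d=2$; your fix is to replace $\B$ by an auxiliary $\B'$ that pre-restricts to $D_t$, and to assert that Lemma~\ref{lem:balls} ``carries over with essentially no changes''. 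That assertion is unsupported: $\B'$ has a genuinely different placement distribution (it can even fail while $H_t$ still has empty bins, if $D_t$ happens to miss them all), so you would have to re-derive the estimate of Berenbrink et al.\ for $\B'$, which defeats the purpose of reducing to Lemma~\ref{lem:balls}.

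The detour is unnecessary because your invariant already yields the bound you need, for every $d\ge 2$. Once $B_{\mathrm{bal}}(t-1)\subseteq B_\B(t-1)$, every bin in $E_t$ is also empty (load $0$) under balanced allocation; that is, $E_t\subseteq F_t$, where $F_t$ is the set of load-$0$ bins of $H_t$ on the balanced-allocation side. The balanced allocation rule---pick $D_t\subseteq H_t$ uniformly, then a uniformly random least-loaded bin of $D_t$---is invariant under every permutation of $H_t$ that permutes $F_t$ and fixes $H_t\setminus F_t$, so, conditional on the history, $\Pr{b_{\mathrm{bal}}=j}$ is constant over $j\in F_t$. Hence $\Pr{b_{\mathrm{bal}}=j}\le 1/|F_t|\le 1/|E_t|$ for every $j\in E_t$, and the residual coupling is well-defined for all $d$. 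In fact this shows $b_{\mathrm{bal}}$ conditioned on the event $\{b_{\mathrm{bal}}\in E_t\}$ is \emph{uniform} on $E_t$, which is exactly why the simpler rule in the paper's coupling---set $b_\B=b_{\mathrm{bal}}$ when $b_{\mathrm{bal}}\in E_t$, and otherwise place $b_\B$ uniformly at random on $E_t$---already reproduces $\B$'s marginal, with no residual computation at all.
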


	\begin{proof}%[Proof of Lemma~\ref{lem:empty}]
		We apply a coupling technique between the balanced allocation process on a dynamic hypergraph and $\B$.  
		
		Let us first consider an identical copy of the set of bins, called $B$. 
		The coupled process sequentially allocates a ball to a pair of bins.
		In round $t=1,\ldots, m$, the $t$-th ball chooses an edge of $\H^{(t)}$ uniformly at random, say $H_t$. Let $H'_t$ be the corresponding set of bins, chosen from $B$. Then the first ball is allocated to a bin, say $i$, contained in $H_t$ according to the balanced allocation. If $i\in H'_t$ is empty then the second ball is allocated to bin $i\in H'_t$ as well. If $i\in H'_t$ is not empty then the second  ball is allocated to an empty bin from $H'_t$, with ties  broken randomly. If there is no empty bin in $H'_t$ then the coupling fails.
		Note that $H_t$ and $H'_t$ have the same set of bins but may have different loads. 
		Observe that the coupled process allocates balls to bins from $B$ according to $\B$. 
		Next we show that for $t=1,\ldots, m$,
		\begin{align}\label{ineq:ind}
		\emp(H_t)\ge \emp(H'_t),
		\end{align}
		where $\emp(H)$ denotes the number of empty bins contained in $H$.
		For a contradiction, assume that there is a first time $t_1$  such that 
		$\emp(H'_{t_1})>\emp(H_{t_1})$. Then there is vertex $i\in H'_{t_1}$ which is empty, while $i\in H_{t_1}$ has a ball at height zero: this is ball $t_0$, say, where $1\le t_0\le t_1$. 
		This implies that the coupled process has allocated ball $t_0$ to bin $i\in H_{t_1}$,
		but it has not allocated any ball to bin $i\in H'_{t_1}$, since $i$ was empty until round $t_1$.  This contradicts the definition of the coupled process. 
		So Inequality (\ref{ineq:ind}) holds for $t=1,\ldots, m$.    
		Applying  Lemma~\ref{lem:balls} yields that there exists $m=\Theta(n)$ such that 
		$\emp(H_t)\ge \emp(H'_t)\ge \unif/2$
		for $t=1,\ldots, m$.
	\end{proof}

Using this result we can prove that the balanced allocation process is $(\alpha,m)$-uniform on dynamic hypergraphs.

	\begin{lemma}[Uniformity Lemma]\label{lem:uni}
		Fix $d=d(n)$ with $2\leq d = o(\log n)$ and suppose that 
			for some constant $\beta \geq 1$,
			the $s$-uniform dynamic hypergraph $(\H^{(1)},\ldots, \H^{(n)})$ satisfies the
			$\beta$-balanced and size properties, with $d\leq s$.  
			Then there exists a constant $\alpha=\alpha(\beta)$
			%, which depends only on $\beta$, 
			and there exists $m=\Theta(n)$ with $m \leq n$,
			such that the balanced allocation process on $(\H^{(1)},\ldots, \H^{(n)})$
			is $(\alpha,m)$-uniform.  Specifically, we may take $\alpha = 44\beta$.
	\end{lemma}

	\begin{proof}
Fix $m=m(n)$ to equal the $m$ provided by Lemma~\ref{lem:empty}.
		For $t=1,\ldots, m$, let $D_t$ be the $d$-element subset of $H_t$ that is chosen by the $t$-th ball. 
		Define the indicator random variable $\I_t$ as follows:
		\[
		\I_t:= \begin{cases}
		1 & \text{ if $D_t$ contains at least $d/6$ empty vertices, }\\
		0 & \text{ otherwise.}\\
		\end{cases} 
		\]  
		Let us fix an arbitrary bin $i$ and then  define $A(t,i)$ to be the event that the
		$t$-th ball is allocated to vertex $i$.  (The first $t-1$ balls have already been
			allocated, as the balanced allocation process allocates one ball per step.)
		Observe that if $i\not\in D_t$ then $\Pr{A(t,i)} = 0$.  It follows that
		\begin{align*}
		\Pr{A(t,i)}&=\Pr{A(t,i)\mid i\in D_t~ \text{and}~ \I_t=1}\cdot \Pr{i\in D_t ~\text{and}~ \I_t=1}\\
		& {} \quad + \Pr{A(t,i) \mid i\in D_t ~\text{and} ~\I_t=0}\cdot\Pr{i\in D_t~ \text{and}~ \I_t=0}.
		\end{align*}
		Now there are at least $d/6$ empty vertices in $D_t$ when $\I_t=1$, so 
		\[ \Pr{A(t,i)\mid i\in D_t~ \text{and} ~ \I_t=1} \leq 6/d.\]
		It follows that
		\begin{align}
		\Pr{A(t,i)}
		 &\le (6/d)\, \Pr{i\in D_t ~\text{and}~ \I_t=1} + \Pr{i\in D_t~ \text{and}~ \I_t=0}\nonumber \\
		 &\le (6/d)\, \Pr{i\in D_t } + \Pr{i\in D_t~ \text{and}~ \I_t=0}.\label{ineq:prefirst}
		\end{align} 
		In order to have $i\in D_t$, first an edge containing $i$ must be selected, and then the chosen $d$-element subset of that edge must contain $i$. By the $\beta$-balancedness property, 
		\begin{equation}
		\label{eq:beta-balancedness}
		\Pr{i\in D_t}\le \frac{\beta \unif}{n}\cdot \frac{\binom{\unif-1}{d-1}}{\binom{\unif}{d}}\le \frac{\beta d}{n}.
		\end{equation}
		First suppose that $d\leq 6$.
		Then it follows from (\ref{ineq:prefirst}) and (\ref{eq:beta-balancedness}) that
		\[
		\Pr{A(t,i)}\le \big( 6/d + \Pr{\I_t=0\mid i\in D_t}\big)\, \Pr{i\in D_t} \leq \frac{6\beta}{n} + \frac{\beta d}{n}\, \leq \frac{12\beta}{n}.
		\]
Taking $\alpha = 12\beta$,  this completes the proof when $d\leq 6$. 

		For the remainder of the proof we assume that $d\geq 7$. 
		Let $\F$ be the event that $H_t$ contains at least $\unif/2$ empty vertices
		for all $t=1,\ldots, m$. 
		By Lemma~\ref{lem:empty}, we have $\Pr{\F}\ge 1-n^{-2}$. 	
		Then 
\begin{align}
  & \hspace*{-1cm}\Pr{i\in D_t~ \text{and}~ \I_t=0}\nonumber \\
&= \Pr{i\in D_t~ \text{and}~ \I_t=0 \mid \F}\cdot \Pr{\F} +
   \Pr{i\in D_t~ \text{and}~ \I_t=0 \mid \neg \F}\cdot \Pr{\neg \F} \nonumber \\
&\leq \Pr{i\in D_t~ \text{and}~ \I_t=0 \mid \F} + \Pr{\neg \F} \nonumber \\
&\leq 2\Pr{i\in D_t~ \text{and}~ \I_t=0 \text{~and~} \F} + n^{-2}\nonumber \\
&= 2\Pr{\I_t=0 \mid i\in D_t  \text{~and~} \F}\cdot \Pr{i\in D_t \text{~and~} \F} + n^{-2}\nonumber \\
&\leq 2 \Pr{\I_t=0 \mid i\in D_t \text{~and~} \F}\cdot \Pr{i\in D_t} + \frac{\beta}{n}\nonumber \\
&\leq \frac{2\beta d}{n}\, \Pr{\I_t=0 \mid i\in D_t \text{~and~} \F} + \frac{\beta}{n},\label{eq:intermediate}
\end{align}
using (\ref{eq:beta-balancedness}) for the final inequality.
Hence it suffices to prove that
		\begin{align}\label{u:cl}
		\Pr{\I_t=0 \mid i\in D_t \text{~and~} \F} \leq \hat{c}/d
		\end{align}
		for some absolute constant $\hat{c}>0$.
		From this, it follows from (\ref{ineq:prefirst}) and (\ref{eq:intermediate}) that $\Pr{A(t,i)} \leq \alpha/n$ where $\alpha = \beta(7 + 2\hat{c})$.
As $i$ was an arbitrary bin, this proves that the process is $(\alpha,m)$-uniform.
		Let $X$ be the random variable that counts the number of empty bins of a random $(d-1)$-element subset of  $H_t\setminus\{i\}$, 
		{conditioned on the event that ``$(i\in D_t)$ and $\F$'' holds.
			Then $X$} is a hypergeometric random variable with parameters $(\unif-1, K, d-1)$, 
		where $K$ is the number of empty bins contained in $H_t\setminus\{i\}$. 
		Thus 	\[
		\Ex{X}=\frac{(d-1)K}{\unif-1} \text{~~and~~} \Var{X}\le \frac{(d-1)K}{\unif-1}\le d.
		\]
		Then $\Ex{X} > d/3$, since $K\geq \unif/2-1$ when $i\in D_t$ and $\F$ holds
		(and using the size property $s = \Omega(\log n)$ and the fact that $d\geq 7$).
		Therefore
		\begin{align*}
		&	\Pr{\I_t=0 \mid (i\in D_t) \text{~and~} \F}\\
		&\le \Pr{X<d/6} \le \Pr{|X-\Ex{X}|\le \Ex{X}/2} < \frac{4\Var{X}}{\Ex{X}^2}
		\le\frac{4(d)}{(d/3)^2}=\frac{36}{d},
		\end{align*}
		using Chebychev's inequality. This establishes (\ref{u:cl}) with $\hat{c}=36$, completing the proof.
	\end{proof}

	We are ready to prove Lemma~\ref{lem:col}. 
	
	\begin{lemma}[Restatement of Lemma~\ref{lem:col}]\label{lem:col1} 
	Let $\beta,\, c_0\geq 1$ be constants and let 
				$\varepsilon = \varepsilon(n)\in (0,1)$ such
				that $n^\varepsilon\to\infty$.
		Let $d=d(n)$ be an integer-valued function such that $2\leq d = o(\varepsilon\, \log n)$.
		Let $(\H^{(1)},\ldots, \H^{(n)})$ be a dynamic hypergraph which satisfies the
			$\beta$-balanced,
			$\varepsilon$-visibility and $c_0$-size properties.
		Suppose that $c = c(\beta) \geq 44\beta \e^2$ is a sufficiently large constant,
		and let $k=C\log n$ for some constant
		$C\geq 1$. There exists $\Omega(n)= m\le n$ such that the probability that $\C_{m}$ contains a $c$-loaded  $k$-vertex tree  is at most 
		\[
		\exp\Big\{4k\log({2\beta d})- c(d-1)(k-r-1) + \big(c_0 + 3 - r\varepsilon/2\big) \log(n)\Big\}
		\]
		where $r$ is the number of red vertices in the  blue-red coloring of the tree. 
		%Moreover, with high probability, if $\C_{m}$ contains any such tree
		%then $r=\Oh(1/\varepsilon)$.
		Moreover, with probability $1-n^{-\omega(1)}$ no such tree exists with $\omega(1/\varepsilon)$ red vertices.
	\end{lemma}
	
	\begin{proof}
		Fix $m=m(n)$ to equal the $m$ provided by Lemma~\ref{lem:uni}.
		There are at most $4^k$ ordered trees with $k$ vertices.
		(Proposition~\ref{pro:ordered}). 
		Fix such a tree, say $T$, and label the vertices $\{1,2,\ldots, k\}$ 
		such that vertex $i$ is the $i$-th new vertex visited when performing depth-first search in $T$ starting from the root, and respecting the given ordering.
		In particular, the root of $T$ is vertex $1$. 
	Next, we will assign a $d$-choice to the root vertex of $T$, as a first
			step in describing trees which may be present in the 
			conflict graph $\mathcal{C}_m$.
		Let $x$ count the number of possible $d$-choices that can be assigned to the root of $T$. An upper bound on $x$ is given by the number of
		$d$-subsets of edges that have been present at some point
		during steps $t=1,\ldots, m$. Hence
		\[
		x\le \binom{\unif}{d}\cdot \left|\bigcup_{t=1}^{m}\Ec_t\right|\cdot m
		\le \binom{\unif}{d}\cdot n^{c_0+2},
		\]
		where the last inequality follows from the size property and the inequality $m\leq n$.   Therefore, there are $x$ possibilities for the root and hence  
			there are at most 
			$4^k\cdot \binom{\unif}{d}\cdot n^{c_0+2}$ ordered  trees with the specified root.  
			%Fix an arbitrary $d$-choice $D_t$ as the root for $T$.
		
		Next we fix an arbitrary function $\col:\{2,\ldots,k\}\rightarrow \{\text{blue},\text{red}\}$, that gives a  blue-red coloring of $2,\ldots,k$.  In what follows we  establish an upper bound for the probability that $\C_{m}$ contains the  blue-red colored   tree $T\subset \C_m$, (according to Definition~\ref{def:br}). 
		Let $q_1(t)$ be the probability that the $t$-th ball chooses the root of $T$
		{(that is, that the $d$-choice made by the $t$-th ball corresponds to the root of $T$)}. 
		Then
		\begin{align}\label{pr:no}
		\sum_{t=1}^{m}q_1(t)\le \sum_{t=1}^{m}\frac{1}{\binom{\unif}{d}}\le 
		\frac{n}{\binom{\unif}{d}}
		,
		\end{align}
		because $H$ contains  $\binom{\unif}{d}$ distinct $d$-element sets for every $H\in \Ec_t$.
		%%%
		For every $t=2,\ldots, k$, define $q_i(t,\col(i))$ to be  the  probability that 
		the $t$-th ball chooses the $i$-th vertex of the tree (i.e., $i$) with $\col(i)$. 
		If $\col(i)$ is red then $D_{t}$ must share at least two bins with  $\cup_{j=1}^{i-1}D_{t_j}$, 
		while if $\col(i)$ is blue then  $D_{t}$ only shares one bin with its parent. 
		For every  $i=2,\ldots, k$, let us derive an upper bound on $q_i(t, \text{blue})$.
		Here,  the $i$-th vertex  shares one bin with its parent in $T$, say $D_{t_j}$. 
		Now $D_{t_j}$ has $d$ bins and by the balancedness property  we get
		\[
		\Pr{D_{t_j}\cap H_t\neq \emptyset}\le\sum_{i\in D_{t_j}}\Pr{i\in H_t}\le  
		\frac{\beta d\unif}{n},
		\]
		where $H_t$ is the edge chosen by ball $t$ from $\H^{(t)}$, uniformly at random.
		Suppose that for some $a\ge 1$ we have  $|D_{t_j}\cap H_t|=a\le d$. 
		Then the total number  of $d$-element subsets of $H_t$ which share only one bin with $D_{t_j}$ is  $a\binom{\unif-a}{d-1}\le d\binom{\unif-1}{d-1}$. 
		Thus, we get
		\begin{align}\label{pr:blu}
		\sum_{t=1}^m q_i(t, \text{blue})\le \sum_{t=1}^m \frac{\beta d\unif}{n}\cdot
			d \frac{\binom{\unif-1}{d-1}}{\binom{\unif}{d}}=\sum_{t=1}^m\frac{\beta d^3}{n}
			\le \beta d^3,
		\end{align}
		because $m\le n$.
		
		Next, for every  $i=2,\ldots, k$, and every $t=2,\ldots, m$,
		we need an upper bound on $q_i(t, \text{red})$.
		If the $i$-th vertex of the tree is the set $D_t$ and is coloured red, then $D_t$
		is a $d$-element set of bins which shares at least two bins with
		$\cup_{j=1}^{i-1}D_{t_j}$.
		One of these bins belongs to the (known) parent, and the other belongs to $D_{t_1}\ldots,
		D_{t_{i-1}}$.  So if $U$ is the number of choices for this pair of bins, then
		\begin{align}\label{up:u}
		U\le d\cdot (i-1)d\le kd^2.
		\end{align}
		Let $\{p_1,p_2,\ldots, p_U\}$ be the set of such pairs of bins.

	As usual, let $H_t$ be the random edge chosen at step $t$. Then
			for any fixed pair of bins $\{x,y\}$,
				by the balancedness condition,
\begin{align}
%\Pr{p_J\subset  D_t}
& \hspace*{-1cm} \Pr{x\in D_t ~\text{and}~ y\in D_t}\nonumber \\
&= \Pr{x\in H_t ~\text{and}~ y\in H_t}\cdot \Pr{ x\in D_t ~\text{and}~ y\in D_t\mid x\in H_t ~\text{and}~ y\in H_t}  \nonumber \\
& = \Pr{x\in H_t \mid y\in H_t}\cdot \Pr{y\in H_t}\cdot  \frac{\binom{\unif-2}{d-2}}{\binom{\unif}{d}}\nonumber \\
&\le \Pr{x\in H_t \mid y\in H_t}\cdot\frac{\beta s}{n} \cdot \frac{d(d-1)}{\unif(\unif-1)}\nonumber \\
&=\Pr{x\in H_t\mid y\in H_t} \frac{\beta d(d-1)}{(\unif-1)n}\label{ineq:pairs}
\end{align}
		%as $\binom{\unif-2}{d-2}$ is the number of $d$-element subsets of $H_t$ which contain the pair $p_J$.  Then
We will apply this to $p_J = \{x,y\}$ for $J\in \{1,\ldots, U\}$.
By definition of visibility, $p_J$ appears at most $\viz{p_J}$ times as a subset of a hyperedge of the current hypergraph $\H_t$, for $t=1,\ldots, n$.
This is a property of hypergraphs exposed  over $n$ rounds.
Hence there exists a set of rounds, say
$L_J\subseteq \{1,\ldots, n\}$ with $|L_J|= n - \viz{p_J}$, such that for all $t\in L_J$, the  hypergraph 
$\H_t$ has no hyperedge which contains both $x$ and $y$. Therefore, for every 
$t\in L_J$ we have that  $\Pr{p_J\subseteq D_t} = 0$.
Thus, using (\ref{ineq:pairs}),
 \begin{align}\label{pr:red}
&\sum_{t=1}^{m} q_i(t, \text{red})\le \sum_{J=1}^U  
 \sum_{t=1}^{n} \Pr{p_J\subseteq D_t} \leq \sum_{J=1}^U \frac{\beta d(d-1)}{(\unif-1)n}\,\sum_{t \in [n]\setminus L_J}  \Pr{x\in H_t\mid y\in H_t}\nonumber\\
 &\le  \sum_{J=1}^U \frac{\beta d(d-1)}{(\unif-1)n} \viz{p_J}\leq 
 {\frac{\beta kd^4}{n^\varepsilon}}.
 \end{align}
The final inequality follows from (\ref{up:u}) the visibility property and the fact that $d\leq s$.

	Write $\col^{-1}(\text{blue})$ for the set of blue vertices in $T$,
		and similarly for $\col^{-1}(\text{red})$.  Then
		\[|\col^{-1}(\text{red})|+|\col^{-1}(\text{blue})|=k-1.\]
		Suppose that $(t_1,\ldots,t_k)$ is the sequence of balls that are going to select vertices $1,2,\ldots, k$ of $T$. By applying (\ref{pr:no}), (\ref{pr:blu}) and (\ref{pr:red}), we  find that the probability that the edges of the colored tree $T$ appears in $\C_{m}$ at times $(t_1,\ldots, t_k)$, and the corresponding sets
		$D_{t_1},\ldots, D_{t_k}$ consistent with the chosen blue-red coloring scheme,
		is at most 
		\begin{align}
		&\sum_{(t_1,\ldots,t_k)} \left\{q_1(t_1)\prod_{i=2}^k q_{i}(t_i, \col(i))\right\}
		\le \left(\sum_{t=1}^{m} q_1(t)\right){\prod_{i=2}^{k}}\left(\sum_{t=1}^{m} q_i(t, \col(i))\right)\nonumber\\
		&\leq \frac{n}{\binom{\unif}{d}} \,
		\left(  \prod_{i\in \col^{-1}(\text{blue})} \, 
		\sum_{t=1}^{m} q_i(t,\text{blue})\right) 
		\, \left(  \prod_{i\in \col^{-1}(\text{red})} \, 
		\sum_{t=1}^{m} q_i(t, \text{red})\right)
		\nonumber\\
		&\leq \frac{n}{\binom{\unif}{d}} \left({\beta d^3}\right)^{|\col^{-1}({\text{blue}})|} \left({\frac{\beta kd^4}{n^{\varepsilon}}}\right)^{|\col^{-1}(\text{red})|}
		\le \frac{{n\beta^k d^{4k}}}{\binom{\unif}{d}}  \left(\frac{k}{n^{\varepsilon}}\right)^{|\col^{-1}(\text{red})|}. \label{huge}
		\end{align}  
		There are at most $2^{k-1}$ coloring functions and $4^{k}\text{poly}(n) \binom{\unif}{d}$ rooted and  ordered trees. So by the upper bound (\ref{huge}),
		together with the union bound over all colored ordered trees, we obtain
		\begin{align}\label{ineq:final}
		&\Pr{\text{ $\C_{m}$ contains a valid blue-red colored $k$-vertex tree with $r$ red vertices }}\nonumber\\
		&\le  4^k 2^{k-1}\cdot n^{c_0+2} \binom{\unif}{d} \cdot \frac{{n\, \beta^k d^{4k}}}{\binom{\unif}{d}}  \left(\frac{k}{n^{\varepsilon}}\right)^{r}\nonumber \\
		& \le n^{c_0+3}\cdot (2\beta d)^{4k} \cdot n^{-r\varepsilon/2}
		%\le \exp\big( n^{c_0+2}\e^{4k\log(2d) + (c_0+2 -r\varepsilon/2} )\log n\big).
		\le \exp\big( 4k\log(2\beta d) + (c_0+3 -r\varepsilon/2 )\log n\big),
		\end{align}
		using $k=\Oh(\log n)$ for the penultimate inequality.
		
		Let $b=k-r-1$ be the number of blue vertices and let
		$D_{s_1},\ldots,D_{s_{b}}$ be the sorted list of blue vertices such that
		$s_1<s_{2} < \cdots < s_b$.   
		Then, by the definition of blue-red coloring, for every $j=1,\ldots, b$ we have 
		$
		|(\cup_{g=1}^{j-1}D_{s_g}) \cap D_{s_j}|\le 1$. 
		%because each blue vertex in
		%the tree shares at most one bin with the previous blue $D_t$'s in the sorted list. 
		This implies that 
		\[
		y=|\cup_{j=1}^{k} D_{t_j}|\ge|\cup_{j=1}^{b} D_{s_j}|\ge (d-1)b=(d-1)(k-1-r), 
		\]
		{since $\{s_1,\ldots, s_b\}\subseteq \{t_1,\ldots, t_k\}$}.
		Applying Lemma~\ref{lem:uni} implies that the 
		balanced allocation is $(\alpha, m)$-uniform, where $\alpha = 44\beta$, say. 
		Hence for any $c\geq 44\beta \e^2$, the  probability that each bin in
		$\cup_{j=1}^{k} D_{t_j}$ is allocated at least $c$ balls 
		(that is, the tree $T$ is $c$-loaded) is at most 
		\[
		\binom{m}{cy} \left(\frac{\alpha y}{n}\right)^{cy}\le 
		\left(\frac{\e m}{cy}\right)^{cy} \left(\frac{\alpha y}{n}\right)^{cy}\le \left(\frac{\e\alpha}{c}\right)^{cy}\le \e ^{-c(d-1)(k-r-1)},
		\]
		where the last inequality follows from $m\le n$ and the fact that 
		$c>\alpha \e^2$. Since balls are independent from each other, we can multiply the above 
		inequality by  (\ref{ineq:final}) to show that the probability that
			$\C_m$ contains a $c$-loaded $k$-vertex tree with $r$ red vertices is at most
		\begin{equation}
		\label{eq:prob-bound}
		\exp\Big\{4k\log(2\beta d)- c(d-1)(k-r-1) + \big(c_0 + 3 - r\varepsilon/2\big) \log n\Big\},
		\end{equation}
		proving the first statement of the lemma.
		%&\le \exp\{c_0\log n+4k\log(2d)-r\varepsilon \log n/2-c(d-1)(k-r-1)\},\\
		%&\le \exp\{c_0\log n+c(d-1)+k(4\log(2d)-c(d-1))+ r(c(d-1)-\varepsilon \log n/2)\}\\
		%&\le \exp\{\Theta(\log n)-\Theta(kd) -r\varepsilon\Theta(\log n)\}
		%\end{align*}
		%which holds because there exists some $c_0$ such that  $\text{poly}(n)\le n^{c_0}$ and $d=o(\log n)$. 
			Finally, suppose that $r\varepsilon\to\infty$ as $n\to \infty$.  Recall that $\beta$ and $c=c(\beta)$ are positive constants,
			$d=o(\varepsilon\, \log n)$ and $k=\Theta(\log n)$.
			If $d=O(1)$ then $4\log(2\beta d) - c(d-1) = O(1)$,
			and otherwise $4\log(2\beta d) - c(d-1) < 0$
			for $n$ sufficiently large.
			Hence the upper bound
			in (\ref{eq:prob-bound}) can be written as
			\begin{align*}
			& \exp\Big\{ \big( 4\log(2\beta d) - c(d-1) \big) k + c(d-1)(r-1) - (r\varepsilon/2)\log n  + \Oh(\log n) \Big\}\\
		&\leq \exp\Big\{ cdr - (r\varepsilon/2)\log n + \Oh(\log n)\Big\}\\
		&\leq \exp\Big\{  - \Theta(r\varepsilon)\log n\Big\}
			\end{align*}
		since $r\varepsilon\to\infty$ and $d=o(\varepsilon\, \log n)$.  
	This probability is $n^{-\omega(1)}$ even after multiplying by $n$ to cover
		the possible values for $r$.
	This completes the proof.
	\end{proof}

	\section{Proof of Theorem \ref{thm:lower-bound}}\label{sec:lower-bound}
	\begin{proof}
			Let $G=([n],E)$ denote an $s$-regular graph that does not contain any $4$-cycle, where $s= n^{\varepsilon}$.
			It is worth mentioning that there are several explicit families of $s$-regular graphs with  girth $\log _s n$ (e.g., see \cite{Dahan14}). For each $i\in [n]$, let $N(i)$ be the set of vertices adjacent to $i$. Also, let  
			$\H=([n], \{N(i) \mid i=1,\ldots,n\})$ denote a hypergraph obtained from $G$. 
			We consider the $s$-uniform dynamic hypergraph $(\H,\H,\ldots,\H)$.
			%is the $s$-uniform dynamic hypergraph.
			Clearly, for every $\{i,j\}\subset[n]$ we have that 
			\[
			\viz{i,j}\le n\le s n^{1-\varepsilon}
			\]
			Therefore, the dynamic hypergraph is $\varepsilon$-visible.
			{Fix an integer $d$ such that $2\leq d\leq s$ and $d$ is constant.}
			Since $G$ does not contain any $4$-cycle, we deduce that every $d$-{subset} of vertices only appears in at most one hyperedge of $\H$. Therefore, the probability that a $d$-{subset} is chosen by any ball is $1/(n{ s\choose d})$. 
			Let $D=\{i_1,i_2,\ldots,i_d\}\subset [n]$ be an arbitrary  set of $d$ vertices contained in some hyperedge of $\H$. Let $X(D,k)$ be an indicator random variable taking one if at least $k$ balls choose $D$ and zero otherwise. Then we have that   
			\[
			\Pr{X(D,k)=1}={n \choose k}\left(\frac{1}{n{s \choose d}}\right)^k
			\]
			Also let $Y_k=\sum_{D}X(D,k)$ denote the number of $d$-{subsets} that are chosen by at least $k$ balls. By linearity of expectation we have that 
			\begin{align}\label{eq:exp}
			\Ex{Y_k}=\sum_{D}\Ex{X(D,k)}=n{s \choose d}{n \choose k}\left(\frac{1}{n{s \choose d}}\right)^k\ge n\left(\frac{ s^{-d}}{k}\right)^k=n\left(\frac{ n^{-d\varepsilon}}{k}\right)^k,
			\end{align}
			where the last inequality follows from ${n \choose k}\ge (\frac{n}{k})^k$ and ${s\choose d}<s^{d}$. In what follows we show that with high probability there exists $k$ such that $Y_k\ge 1$.
			Suppose that $d\varepsilon=\Theta(1)$, then if we set $k=1$, then  there is a $d$-subset which is picked by at least one ball and hence $Y_1\ge 1$.  If $(\log\log n)/(3\log n)<d\varepsilon$ and $d\varepsilon=o(1)$, then by setting   $k=1/(6d\varepsilon)$ we have $k<(\log n)/(2\log\log n)<\log n$ and 
			\[
			\Ex{Y_k}\ge n k^{-k}n^{-kd\varepsilon}\ge n (\log n)^{-\log n/(2\log \log n)} n^{-1/6}={n^{1/3}}=\omega(\log n).
			\]
			Moreover, if $d\varepsilon \le \log\log n/(3\log n)$, then by letting  $k=\log n/(2\log\log n$) we get that 
			\[
			\Ex{Y_k}\ge n k^{-k} n^{-kd\varepsilon}\ge n (\log n)^{-\log n/(2\log\log n)}n^{-1/6}=n^{1/3}=\omega(\log n).
			\] 
			Therefore, there exists $k=\min\{\Omega(1/\varepsilon), \Omega(\log n/\log\log n)\}$ so that $\Ex {Y_k}=\omega(\log n)$.	
			As the number of balls is $n$,   for a given $k$, the random variables $X(D,k)$  are negatively correlated.
			This follows because if  some set  $D$ is chosen at least $k$ times, then the number of balls which may choose some other set  reduces to at most  $n-k$, and hence   the chance  for other sets to be  chosen at least $k$ times is decreased.
			 Application of the Chernoff  bound for negatively correlated random variable implies that 
			\begin{align*}
			\Pr{Y_k\le \Ex{Y_k}/2}\le \exp({-\Ex{Y_k}}/{8})=\exp(-\omega(\log n)) = n^{-\omega(1)}. 
			\end{align*}
			It follows that with probability $1-n^{-\omega(1)}$ there exists a $d$-subset $D$ which is chosen by at least $k$ balls, and hence with probability $1-n^{-\omega(1)}$ there is at least one bin in $D$ whose load is at least $k/d$.       
		\end{proof}

	\section{Balanced Allocation on Dynamic Graphs}\label{sec:graph}
	
	In this section we show an upper bound for maximum load attained by the balanced allocation on regular dynamic graphs (i.e.,  Theorem~\ref{thm:s2c}).
Our arguments %The lemmas and their proofs 
are inspired by~\cite[Lemma 2.1 and 2.2]{KP06}.
	%The algorithm proceeds in successive rounds, in each round $1\le t\le n$ ball $t$ picks a random edge from $G^{(t)}$, where $G^{(t)}$ is a $\Delta_t$-regular graph whose vertex set is the set of $n$ bins. Then the ball is allocated to an endpoint of the edge with a lesser load, with ties are broken randomly. 

	Suppose that the balanced allocation process has allocated $n$ balls to the dynamic
	regular graph $(G^{(1)},\ldots, G^{(n)})$.
	Define the  \emph{conflict graph} $\C_n$ formed by the edges selected by the $n$ balls.
	The vertex set of $\C_n$ is the set $[n]$ of bins, and the loads of these bins are updated during the process.
	
Given a tree $T$ which is a subgraph of $\C_n$, and vertices $u$, $v$ of the tree,
if $\{u,v\}$ is an edge of $\C_n$ then we say it is a \emph{cycle-producing edge}
with respect to the tree $T$.  The name arises as adding this edge to the tree
would produce a cycle, which may be a 2-cycle if the edge $\{u,v\}$ is already 
present in $T$.
	%and a new edge, if the both endpoints of the edge is contained in the tree, then the edge is called cycle-producing edge, which also includes the multiple-edges.  
	For a positive integer $c>0$, a  subgraph of $\C_n$ is called $c$-\emph{loaded} if each vertex (bin) contained in the subgraph has load at least $c$.

The following  proposition presents some properties of connected components of $\C_n$.
	
	\begin{proposition}\label{pro:dy} 
		Let $(G^{(1)}, \ldots, G^{(n)})$ be a regular dynamic graph on vertex set $[n]$
		which is $\varepsilon$-visible.  Let that $\C_n$ be the conflict graph obtained after 
		allocating $n$ balls using the balanced allocation process. Then
		for every given constant $c>0$, with probability at least $1-n^{-8c}$,  every $12(c+1)$-loaded connected component of $\C_n$ contains strictly fewer than $\log n$ vertices. Moreover, with probability at least $1-n^{-12(c+1)}$, the number of cycle-producing edges in each such component is at most $2(12c+13)/\varepsilon$.  % Was $2(c+1)/\varepsilon$ but should be $2(c_1+1)/\varepsilon$, with $c_1 = 12(c+1)$.
	\end{proposition} 

	We will prove the proposition in Section~\ref{sub:dy}. 
We now explain how to recursively build a witness graph, provided there exists a bin whose load is higher than a certain threshold.
	
	\paragraph*{Construction of the Witness Graph}{ Let us start with a bin, say $r$, with  $\ell+c$ balls. Clearly, if a ball is in bin $r$ at height  $h$ then the other bin it chose, as part of the balanced allocation procedure, had load at least $h$.
		Starting from bin (vertex) $r$, let us recover all $\ell$ edges corresponding to the balls that were placed in $r$ with height at least $c$. Thus, the alternative bin choices have loads at least $\ell+c-1,\ldots,c$, respectively.  These $\ell$ bins are all neighbours of 
		$r$ in $\C_n$, and we refer to them as the children of $r$.  Next, we recover the edges corresponding to balls placed in the children of $r$ at height at least $c$. Recursively, we continue until there is no ball remaining at height $c$ or more. For every $i=1,\ldots, \ell$, 
		let $f(\ell-i)$ denote the number of vertices generated by the recursive construction,
		starting with a bin which contains $\ell-i+c$ balls.
		Assume for the moment that, for each vertex with load at least $c$, the recursive procedure always produces distinct children. 
		Then
		\[
		f(\ell)\ge f(\ell-1)+f(\ell-2)+\ldots+f(0)+1,
		\]
		where $f(0)=1$. 
		A simple calculation shows that  $f(\ell)\ge 2^\ell$. 
		Thus, the recursive procedure gives a  $c$-loaded tree with at least $2^{\ell}$ vertices, 
		{under the assumption that the children of each vertex considered by the recursion are
			all distinct.}
	}
	\medskip
	
	We may now prove our main result on dynamic regular graphs.
	
	\begin{proof}[Proof of Theorem~\ref{thm:s2c}]\
		We want to show that after $n$ balls have
		been allocated to the dynamic regular graph $(G^{(1)},\ldots, G^{(n)})$,
		which satisfies the $\varepsilon$-visibility property,
		the maximum load is at most $\log_2 \log n + \Oh(1/\varepsilon)$ with
		high probability.
		
		Let $c>0$ be a given constant.
		By the second statement of Proposition~\ref{pro:dy}, 
with probability at least $1-n^{-12(c+1)}$,
		the number of cycle-producing edges in a given
			component of $\C_n$ is at most $c_2 = 2(12c+13)/\varepsilon$. 
		For a contradiction, suppose that there exists a bin, say $r$, 
		which has at least $\ell+c_1+c_2+1$ balls, where $c_1 = 12(c+1)$.
		Consider $c_2+1$ balls in $r$ at height at least $\ell+c_1$. 
		The children of $r$ in $\C_n$ are the bins $r_1,r_2,\ldots, r_{c_2+1}$ 
		(which might not be distinct), which were the alternative choice of these $c_2+1$ balls.
		Each of these children $r_i$ has load at least $\ell + c_1$. 
		We start the recursive construction at each child $r_i$ of $r$.  Assuming that
		this component of $\C_n$ contains at most $c_2$ cycle-producing edges, it follows that 
		for at least one
		child $r_i$ of $r$, the recursive procedure gives distinct children for each vertex
		which is a descendent of $r_i$. Hence we obtain a $c_1$-loaded tree which has
		$2^{\ell}$ vertices. 
		Substituting $\ell = \log_2\log n$ and applying the first statement of Proposition~\ref{pro:dy}, 
we conclude that with probability at least $1-n^{-8c}$
		such a structure does not exist in $\C_n$.  
		This contradiction shows that %with high probability, 
with probability at least $1-2n^{-8c}\geq 1-n^{-c}$,
the maximum load after
		$n$ balls have been allocated is at most $\log_2\log n +\Oh(1/\varepsilon)$. 
	\end{proof}

\subsection{Proof of Proposition~\ref{pro:dy}}\label{sub:dy}

In this subsection we will prove two lemmas and then combine them to establish Proposition~\ref{pro:dy}. %the proposition. 
	%Recall that a subgraph of $\C_n$ is $c$-loaded if every vertex (bin) in the subgraph has load at least $c$.
	
	\begin{lemma}\label{lem:tree}
	Let $k$ be a positive integer and let $c_1>0$.
		The probability that conflict graph $\C_n$ contains a $c_1$-loaded
		connected component with $k$ vertices is at most 
		\[
		n\cdot 8^{k}\cdot\left(\frac{2\e}{c_1}\right)^{c_1k}.
		\]
		Moreover, by setting $c_1=12(c+1)$, we conclude that with probability at least $1-n^{-8c}$, the conflict graph $\C_n$ does not contain a $c_1$-loaded tree with at least
		$\log n$ vertices.
	\end{lemma}

	\begin{proof}
		A connected component in $\C_n$ with $k$ vertices contains a spanning tree with $k$ vertices. By Proposition~\ref{pro:ordered}, there are at most $4^{k-1}$ ordered trees
		{with $k$ vertices}. For every ordered tree, we can choose its root in $n$ ways, as we have $n$ bins (vertices).  
		Hence there are at most $n\cdot 4^{k-1}$ rooted and ordered trees. Let us fix an arbitrary ordered tree $T$ with a specified root. Also let $(t_1, \ldots, t_{k-1})$ denote an arbitrary sequence of rounds, where $t_i \in\{1,\ldots, n\}$ is the round when the $i$-th edge of the ordered tree  $T$ is chosen. 
		Notice that in an  ordered tree with specified root, the $i$-th edge always connects the $i$-th child to its parent, and the parent is already known to us. Therefore, to build the tree, the $i$-th edge of the tree must be chosen from edges of $G^{(t_i)}$ that are adjacent to the known parent. This implies that the algorithm chooses the $i$-th edge of $T$ in round $t_i$ with probability 
		$\frac{\Delta_{t_i}}{n\Delta_{t_i}/2}=\frac{2}{n}$.
		Since  balls are independent from each other,  the tree $T$  is constructed at
		the given times $(t_1,\ldots,t_{k-1})$ with probability 
		\begin{align}\label{up:1}
		\left(\frac{2}{n}\right)^{k-1}.
		\end{align} 
		On the other hand, ball $t$ is allocated to a given bin with probability at most $\Delta_t/(n\Delta_t/2)=2/n$. Therefore, the probability that $T$  is $c_1$-loaded is at most
		\begin{align}\label{up:2}
		\binom{n}{ck}\left(\frac{2k}{n}\right)^{c_1k}\le \left(\frac{\e n}{c_1k}\right)^{c_1k}\left(\frac{2k}{n}\right)^{c_1k}=\left(\frac{2\e}{c_1}\right)^{c_1 k},
		\end{align}
		where we used the fact that $\binom{n}{c_1k}\le \left(\frac{\e n}{c_1k}\right)^{c_1k} $. Since balls are independent, one can multiply (\ref{up:1}) by (\ref{up:2}) and derive an upper bound for the probability that $T$  is constructed at the given times and 
		is $c$-loaded.
		Taking the union bound over all rooted ordered trees and time sequences gives
		\begin{align*}
		n4^{k-1}\sum_{(t_1,\ldots, t_{k-1})}\left\{\left(\frac{2}{n}\right)^{k-1}\left(\frac{2\e}{c_1}\right)^{c_1k}\right\}
		&\le n4^{k-1} n^{k-1}\cdot \left\{\left(\frac{2}{n}\right)^{k-1}\left(\frac{2\e}{c_1}\right)^{c_1k}\right\}\\
		&=n8^{k-1}\left(\frac{2\e}{c_1}\right)^{c_1k}, 
		\end{align*}
		proving the first statement of the lemma.
By setting $c_1=12(c+1) > 4\e$ and $k=\log n$ in the above formula, we infer that
		the probability that $\C_n$ contains a $c_1$-loaded tree with $\log n$ vertices is at most
		\[
		n8^{k-1}\left(\frac{2\e}{c_1}\right)^{c_1k}< n2^{3k}2^{-12(c+1)k}\le 
n2^{-12ck-9k}\le n^{-8c},
		\] 
		completing the proof.
	\end{proof}

	\begin{lemma}\label{lem:treecycle}
Suppose that the assumptions of Proposition~\ref{pro:dy} hold.
		Further assume %Suppose 
that the conflict graph $\C_n$ contains a $c_1$-loaded $k$-vertex tree $T$, where $c_1> 4\e$ is any constant  and $k$ is a positive integer.
		Let $p$ denote the number of cycle-producing edges (with respect to $T$) 
			which have been added
			between vertices in this tree during the allocation process.
		Then $p < 2(c_1+1)/\varepsilon$ with probability at least  $1-n^{-c_1}$. 
	\end{lemma}

	\begin{proof}
	For a given connected component of $k$ vertices, there are at most $\binom{k}{2}$ 
	edges whose addition may produce a cycle. This includes edges already present in the
		component, as an edge with multiplicity 2 (double edge) forms a 2-cycle.
	Thus, the $p$ edges can be chosen in $\binom{k}{2}^p<k^{2p}$ ways. Let $\{e_1,e_2,\ldots, e_p\}$ denote a set of $p$ cycle-producing edges (some of these may create 2-cycles).
	Also let $(t_1,\ldots, t_p)$ denote a sequence of rounds, where 
	$t_i\in\{1,\ldots, n\}$ is the round in which the $t_i$-th ball picks edge $e_i$. For each round $t=1,2,\ldots, n$ and $i=1,\ldots, p$, let us define $\I_t(e_i)$ as follows:
		\[
		\I_t(e_i)= \begin{cases} 
		1 &  \text{ if $e_i\in E_t$,}\\
		0 &  \text{ otherwise.}
		\end{cases}
		\] 
%to equal~1 if $e_i\in E_t$, and~0 otherwise.
		It is easy to see that 
		\[
		\Pr{\text{ball $t$ picks edge $e_i $ of $G^{(t)}$}}=\frac{\I_t(e_i)}{|E_t|}.
		\]
		Now $\viz{e_i}=\sum_{t=1}^n \I_t(e_i)$ for $i=1,\ldots, p$.
		Using this, and the fact that $|E_t|\geq n/2$ for each $t$  (since $G^{(t)}$ is regular with degree at most 1), 
		the probability that  $e_1,e_2,\ldots, e_p$ are chosen is at most 
		\begin{align}\label{in:cyc}
		\sum_{(t_1,\ldots,t_p)}\left\{\prod_{i=1}^p 
		\frac{\I_{t_i}(e_i)}{E_{t_i}}\right\}
		\le \prod_{i=1}^p\left\{\sum_{t=1}^{n}\frac{\I_{t}(e_i)}{E_{t}}\right\}
		\le \prod_{i=1}^p\frac{{4\viz{e_i}}}{n}\le \left(\frac{{4n^{1-\varepsilon}}}{n}\right)^p=\left(\frac{{4}}{ n^{\varepsilon}}\right)^p.
		\end{align}
		Moreover, applying Lemma~\ref{lem:tree} shows that the probability 
			that $\C_n$ contains a $c_1$-loaded $k$-vertex tree is at most
		\begin{equation}
		\label{new}
		n\cdot 8^k\cdot \left(\frac{2\e}{c_1}\right)^{c_1k}\le n\cdot 2^{-k},
		\end{equation}
		as $c_1 \geq 4\e$.
		So, with high probability, $\C_n$ does not contain any $c_1$-loaded tree with at
		least $(\log n)^2$ vertices. 
		Now assume that $k < (\log n)^2$.  Combining (\ref{in:cyc}) and (\ref{new}),
		and taking the union bound
		over all choices for a set of $p$ edges, we find that the probability that a $c_1$-loaded $k$-vertex tree contains $p$ cycle-producing edges is at most  
		\begin{align}\label{ineq:cycle}
		k^{2p}\cdot \left(\frac{{4}}{ n^{\varepsilon}}\right)^p\cdot n\cdot 2^{-k}= 
		\left(\frac{{4\cdot k^{2}}}{ n^{\varepsilon}}\right)^p\cdot n\cdot 2^{-k}\le n^{-\varepsilon p/2} \cdot n\cdot 2^{-k},
		\end{align}
		where the inequality holds as $k<(\log n)^2$.
		Therefore the probability that $p=  \lceil 2(c_1+1)/\varepsilon\rceil$ cycle-producing edges
		are present is at most $n^{-c_1}$.
		We conclude that $p < 2(c_1+1)/\varepsilon$  with probability at least $1-n^{-c_1}$. 
	\end{proof}

	\begin{proof}[Proof of Proposition~\ref{pro:dy}]
Combining the Lemmas~\ref{lem:tree} and~\ref{lem:treecycle} establishes the 
proposition, using $c_1 = 12(c+1)$ in Lemma~\ref{lem:treecycle}.
	\end{proof}

	\section{Proof of Proposition~\ref{pro:gmn}}\label{sec:pop}

We conclude the paper with the deferred proof of Proposition~\ref{pro:gmn}. First we restate a useful theorem from~\cite{CLLM12}.
	
	\begin{theorem}\emph{\cite[Theorem 3]{CLLM12}}\ \label{chernof}
		Let $M$ be an ergodic Markov chain with finite state space $\Omega$ and stationary
		distribution $\pi$. Let $T = T(\varepsilon)$ be its $\varepsilon$-mixing time for $\varepsilon<1/8$. Let $(Z_1,\ldots, Z_t)$ denote a
		$t$-step random walk on $M$ starting from an initial distribution $\rho$ on
		$\Omega$ (that is, $Z_1$ is distributed according to $\rho$). For
		some positive constant $\mu$ and
		every $i \in [t]$, let $f_i:\Omega\to [0, 1]$ be a weight function at step $i$ such that the expected weight 
		$\Expi{f_i(v)}{\pi} = \sum_{v\in\Omega} \pi(v) f_i(v)$ satisfies
		$\Expi{f_i(v)}{\pi} = \mu$ for all $i$. 
		Define the total weight of the walk $(Z_1, . . . , Z_t)$ by $X=\sum_{i=1}^t f_i(Z_i)$.
		Write $||\rho||_{\pi}=\sqrt{\sum_{x\in \Omega} \rho_x^2/\pi_x}$. 
		Then there exists some positive constant $c$ $($independent of $\mu$ and 
		$\varepsilon)$ such that for all $\alpha\geq 0$,
		\smallskip
		\begin{enumerate}
			\item   $\Pr{X\ge (1+\alpha)\mu t}\le c||\rho||_\pi\, \e^{-\alpha^2\mu t/72 T}$ \,\,\, for $0\le \alpha\le 1$. 
			\item   $\Pr{X\ge (1+\alpha)\mu t}\le c||\rho||_\pi \,\e^{-\alpha\mu t/72 T}$\,\, \,\,\, for $\alpha>1$.
			\item   $\Pr{X\le (1-\alpha)\mu t}\le c||\rho||_\pi \,\e^{-\alpha^2\mu t/72 T}$ \,\,\, for $0\le \alpha\le 1$.
		\end{enumerate}
	\end{theorem}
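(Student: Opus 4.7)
The approach is to fix a pair of agents $\{a,b\}\subseteq A$, view the joint trajectory $\{(a_t,b_t)\}_{t\ge 1}$ as a Markov chain $M$ on the state space $\Omega=V\times V$ (where $V=\mathbb{Z}_\ell^R$ with $\ell=n^{1/R}$), and apply Theorem~\ref{chernof} to the indicator weight function
\[
f(u,v)\;=\;\mathbf{1}\!\left[\,d(u,v)\le r\,\right],
\]
where $d$ denotes Manhattan distance on the torus. Since each agent performs an independent lazy random walk and is initially distributed according to $\pi$, the chain $M$ is ergodic with uniform stationary distribution $\pi\otimes\pi$, the initial distribution is $\rho=\pi\otimes\pi$ (hence $\|\rho\|_{\pi\otimes\pi}=1$), and
\[
\viz{a,b}\;=\;\sum_{t=1}^n f(a_t,b_t)
\]
is exactly the random variable $X$ of Theorem~\ref{chernof}.

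I would next compute the two quantities that drive the tail bound. Since a Manhattan ball of radius $r$ in $\mathbb{Z}_\ell^R$ has size $O(r^R)=n^{o(1)}$, using $r=n^{o(1)}$ and $R$ constant,
\[
\mu\;=\;\Expi{f}{\pi\otimes\pi}\;=\;\frac{|\{(u,v)\in V\times V:d(u,v)\le r\}|}{n^2}\;=\;O(r^R/n)\;=\;n^{-1+o(1)},
\]
so $\mu n=n^{o(1)}$. For the mixing time, standard estimates for the lazy simple random walk on the $R$-dimensional torus give a single-agent mixing time of $O(n^{2/R})$, and the product chain $M$ mixes in time $T=O(n^{2/R})$ up to logarithmic factors.

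Applying part~2 of Theorem~\ref{chernof} with $\alpha>1$ chosen so that $(1+\alpha)\mu n=n^{1-\varepsilon}$ (which forces $\alpha\mu n=\Theta(n^{1-\varepsilon})$) gives
\[
\Pr{\viz{a,b}\ge n^{1-\varepsilon}}\;\le\;c\,\exp\!\left(-\Omega\!\left(n^{1-\varepsilon-2/R}\right)\right).
\]
Choosing any $\varepsilon\in(0,\,1-2/R)$, e.g.\ $\varepsilon=\tfrac12(1-2/R)$, makes the right-hand side super-polynomially small, so a union bound over the $\binom{n}{2}$ pairs establishes the claimed bound simultaneously for all pairs with high probability.

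The main obstacle is that the direct argument requires $R\ge 3$, since $1-2/R>0$ is needed for the Chernoff exponent to be useful. In the low-dimensional regimes $R\in\{1,2\}$ the mixing time $T$ is comparable to or exceeds $n$, and Theorem~\ref{chernof} alone becomes vacuous. To handle those cases one would replace the uniform mixing-time estimate by pointwise heat-kernel bounds of the form $P_t((u_0,v_0),(u,v))=O(\min(1/n,\,t^{-R/2}))$ for the torus walk, split the index set $\{1,\ldots,n\}$ into blocks of length $T$, and exploit approximate independence between well-separated blocks to obtain concentration around $\Ex{\viz{a,b}}$, which already equals $n^{o(1)}$ when $R\ge 2$ and $O(r\sqrt{n})=n^{1/2+o(1)}$ when $R=1$. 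These refinements only affect the specific value of $\varepsilon$; existence of \emph{some} constant $\varepsilon>0$ is preserved.
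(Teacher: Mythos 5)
Your proposal does not actually prove the statement in question. Theorem~\ref{chernof} is the Chernoff--Hoeffding concentration inequality for ergodic Markov chains itself (quoted by the paper from \cite{CLLM12} and not re-proved there); what you have written instead \emph{assumes} that inequality and applies it to bound the pair visibility $\viz{a,b}$ in the mobile-network model, i.e.\ you are sketching a proof of Proposition~\ref{pro:gmn}, not of Theorem~\ref{chernof}. As a proof of the stated theorem this is circular: nothing in your argument addresses why a sum $X=\sum_{i=1}^t f_i(Z_i)$ of bounded functions along a Markov trajectory concentrates at rate governed by the mixing time $T$, with the dependence on the initial distribution entering through $\|\rho\|_\pi$. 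Establishing that requires machinery entirely absent from your sketch --- e.g.\ the spectral/perturbation analysis of the transition operator (bounding the largest eigenvalue of the tilted operator $\mathrm{e}^{\lambda f}P$, in the style of Gillman/Kahale, which is how \cite{CLLM12} proceeds) or a rigorous reduction to independent blocks of length $\Theta(T)$ with an explicit control of the error from non-independence. Merely invoking the conclusion cannot substitute for either.

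As a secondary remark: even read as an attempt at Proposition~\ref{pro:gmn}, your route differs from the paper's and is incomplete where it matters. You apply the bound to the product chain on the full torus, whose mixing time is of order $n^{2/R}$, and you concede this is vacuous for $R\in\{1,2\}$, deferring to unproved heat-kernel and block-decomposition arguments. The paper avoids this entirely by projecting both walks onto a single coordinate and coupling the pair to a walk confined to a moving window, i.e.\ a cycle of length $2n^{\delta}+1$ with $\delta=\min\{1/4,1/R\}$; this makes the relevant mixing time $\Oh(n^{2\delta})$ and the stationary hitting weight $\mu=\Theta(r/n^{\delta})$, so Theorem~\ref{chernof} yields a superpolynomially small failure probability uniformly in $R$, including the low-dimensional cases your argument cannot handle as written.
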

	
	\begin{proof}[Proof of Proposition~\ref{pro:gmn}] 
		Let $\Omega$ be the vertex set of the $R$-dimensional torus $\Gamma(n, R)$ and let
		$a$ and $b$ denote two arbitrary agents. By definition of the communication graph process, agents $a$ and $b$ are initially placed on two randomly chosen vertices of $\Gamma$, say $u_0$ and $v_0$.  Note that  $u_0$ and $v_0$ are independently chosen according to the stationary distribution  $\pi$ of the random walk on $\Gamma(n, R)$.
		Now consider the trajectory of agents $a$ and $b$, which give two independent random walks 
		$u_0, u_1,\ldots $ and $v_0, v_1,\ldots$  on $\Gamma(n, R)$.
		Defining $X_t=(u_t, v_t)$ for  $t=0,1,\ldots$ gives a finite, ergodic Markov chain with 
		stationary distribution $(\pi,\pi)$ on $\Omega\times\Omega$.
		For every $t\ge 0$, define
$f(X_t) = f(u_t,v_t)$ to equal~1 if $d_t(u_t,v_t)\leq r$, and equal~0 otherwise,
		where $d_t(\cdot,\cdot)$ is the Manhattan distance for the given grid. 	
		Let $u^1_t$ and $v^1_t$ denote the projection of the random walks $u_t$ and $v_t$ onto the $1$-dimensional torus $\Gamma(n^{1/R}, 1)$, respectively, defined by taking the first component of each of the random walks on $\Gamma(n,R)$. Then $X^1_t=(u^1_t, v^1_t)$ is  an ergodic Markov chain on  $\Gamma(n^{1/R},1)$, and its initial distribution is stationary.  
		We may also define 
$f(u^1_t,v^1_t)$ to be~1 if $d_t(u^1_t,v^1_t)\leq r$, and~0 otherwise.
		By the Manhattan distance property, if $f(u_t,v_t)=1$ then $f(u^1_t,v^1_t)=1$. Therefore,
		\[
		\viz{a,b}=\sum_{t=0}^nf(X_t)\le \sum_{t=0}^nf(X^1_t).
		\]
		Set $\delta=\min\{1/4, 1/R\}$.  Let $t_0$ be the first time when $d_{t_0}(u^1_{t_0}, v^1_{t_0})\le n^{\delta}$. 
		Consider a moving window $W$ of length  $2n^{\delta}+1$, which contains the locations of $u^1_{t_0}$ and $v^1_{t_0}$. 
		At time $t_0$, the vertices covered by $W$ are labelled in increasing order, with the leftmost vertex labelled~{$-n^{\delta}$}
		and the rightmost vertex labelled $n^{\delta}-1$.  The window $W$ stays at its initial location as long as no agent hits a border of $W$ (vertices labelled $-n^{\delta}$ or $n^\delta$), or the middle vertex of $W$ (labelled 0). 
		Let $b$ be the first agent that hits a border or the centre of $W$. From this time on, $b$ and $W$  are coupled so that
		they both move and/or stay, simultaneously. (If $b$ moves left then $W$ also moves left, for example.) 
		Each time the window $W$ moves, a vertex $u\in \Gamma_1$ is no longer covered by $W$ and a new vertex, $w\in\Gamma_1$, 
		becomes covered by $w$. 
		The new vertex $w$ is assigned the label of vertex $u$. 
		This process always  labels the vertices covered by $W$ by $\{-n^{\delta-1},\ldots,n^{\delta}-1\}$,
		and the movement of agent $b$ over these labeled vertices  simulates a random walk on the additive group  
		$\mathbb{Z}_{2n^\delta + 1}$.
		Define 
		\[
		S=\{1\le t\le n \mid u^1_t ~\text{and}~ v^1_t \in W \}.
		\]
		Assume that $S\neq \emptyset$ and define the chain $Y_t=(u^1_t, v^1_t), t\in S$.
		Then $Y_t$ can be considered as an ergodic Markov chain of length $|S|\le n$  over $\mathbb{Z}_{2n^\delta-1}$,
		or equivalently, as a Markov chain on a $(2n^{\delta}+1)$-cycle. By the proposition assumption we have 
		$r= \Oh(n^{o(1)}) <n^{\delta}$, and so
		\[
		\viz{a,b}=\sum_{t=0}^nf(X_t)\le \sum_{t=0}^nf(X^1_t)\leq \sum_{t\in S}f(Y_t)\le \sum_{t=0}^nf(Y_t).
		\]
		The chain $Y_t$ converges to stationary distribution $(\pi, \pi)$, where $\pi$ is the uniform
		distribution of a random walk on a $(2n^{\delta}+1)$-cycle.
		It follows that for all $t =0,1,\ldots$ we have
		$\Expi{f(Y_t)}{(\pi,\pi)} = \mu = \Theta(r/n^\delta)$, independently of $t$. It is well-known~\cite{LPW06} that 
		the $\varepsilon$-mixing time of the random walk on  a $(2n^\delta+1)$-cycle is $\Oh(n^{2\delta}\log(1/\varepsilon))$. If $\rho$ is the initial distribution $Y_0$, then we have that 
		$||\rho||_\pi\le \Oh(n^{\delta})$.
		Applying Theorem~\ref{chernof} implies that
		\[
		\Pr{\sum_{t=1}^{n}f(Y_t)\ge  \mu \cdot n}
		=\Oh(n^{\delta})\e^{-\Theta(rn^{1-3\delta})}=n^{-\omega(1)}.
		\]
		Therefore, with probability $1-n^{-\omega(1)}$, 
		\[
		\viz{a,b}\le \sum_{t=0}^nf(Y_t) =  \Oh(rn^{1-\delta})= \Oh(n^{1-\delta + o(1)}) = \Oh(n^{1-\varepsilon}),
		\] 
		taking $\varepsilon = \delta/2$, say.
		Taking the union bound over all pairs of agents completes the proof.

	\end{proof}

	%%%%%%%%%%%%%%%%

\end{document}